\newtheorem{theorem}{Theorem}[section]
\newtheorem{lemma}[theorem]{Lemma}
\newtheorem{definition}[theorem]{Definition}
\newtheorem{assumption}{Assumption}[section]
\newtheorem{fact}[theorem]{Fact}
\newtheorem{proposition}[theorem]{Proposition}
\newcommand{\pr}{\mathbb Pr}
\newcommand{\FCO}{\text{``Full Coverage Only''}}
\newcommand{\PCO}{\text{``Partial Coverage Only''}}
\newcommand{\D}{\Delta}
\title{Optimal Pricing Schemes in the Presence of Social Learning and Costly Reporting}
\author{Kaiwei Zhang\footnote{Department of Economics, Penn State University; kqz5253@psu.edu}, Xienan Cheng\footnote{Guanghua School of Management, Peking University; xienancheng@gsm.pku.edu.cn}, Xi Weng\footnote{Guanghua School of Management, Peking University; wengxi125@gsm.pku.edu.cn.} \thanks{*We would like to thank Nageeb Ali, Kalyan Chatterjee, Matthew Elliott, Nima Haghpanah, Yuhta Ishii, Vijay Krishna, Xiao Lin, Jidong Zhou, and various seminar and conference participants for helpful comments and suggestions.} }
\begin{document}
\maketitle

\begin{abstract}
A monopoly platform sells either a risky product (with unknown utility) or a safe product (with known utility) to agents who sequentially arrive and learn the utility of the risky product by the reporting of previous agents. It is costly for agents to report utility; hence the platform has to design both the prices and the reporting bonus to motivate the agents to explore and generate new information. By allowing sellers to set bonuses, we are essentially enabling them to dynamically control the supply of learning signals without significantly affecting the demand for the product. We characterize the optimal bonus and pricing schemes offered by the profit-maximizing platform. It turns out that the optimal scheme falls into one of four types: Full Coverage, Partial Coverage, Immediate Revelation, and Non-Bonus. In a model of exponential bandit, we find that there is a dynamical switch of the types along the learning trajectory. Although learning stops efficiently, information is revealed too slowly compared with the planner's optimal solution.

\medskip

\noindent \emph{Keywords}: Social Learning; Costly Reporting; Dynamic Pricing.

\noindent \emph{JEL}. D83. C02. C61. C73.
\end{abstract}

\newpage
~\\

\section{Introduction}
One of the most common features of online peer-to-peer market places is that the platform frequently does little up-front screening or certification and
instead tries to maintain quality by using reputation and feedback mechanisms
\citep{Einav2016}. In particular, for the organization of online experience product markets such as houses on Airbnb or drivers on Uber, it crucially requires more consumers' reviews to relieve information asymmetry. Though the internet has made it feasible and common for everyone to leave some feedback after purchasing and experiencing, there is insufficient review supply in the real economy. According to some social surveys about online reviews, nearly 53 $\%$ of all internet users worldwide don't provide any reviews about a product, company, or service within a month\footnote{ https://www.oberlo.com/blog/online-review-statistics}; and only 5$\%$-10$\%$ of consumers are used to writing reviews online\footnote{ https://websitebuilder.org/blog/online-review-statistics/}. 

Meanwhile, it is a common practice on online markets to utilize dynamic pricing and review bonus to motivate the consumers to purchase the experience product and leave feedback. For example, dynamic pricing is common in experience goods markets, such as video games and lead to an increase in both revenue and margins in the pilot categories\footnote{https://www.chartboost.com/blog/how-to-make-dynamic-pricing-work-for-free-to-play-mobile-games/; https://www.mckinsey.com/business-functions/marketing-and-sales/how-we-help-clients/dynamic-pricing}. Many online platforms such as Taobao also offer bonus for consumers' reviews. In the literature, however, dynamic pricing (see, e.g., 
\cite{bergemann1996learning},\cite{bergemann2000experimentation}, \cite{bergemann2006dynamic},\cite{weng2015dynamic}, and \cite{hagiu2020data}) and review bonus (see, e.g., \cite{li2010reputation}  and \cite{tadelis2016buying}) are analyzed separately. It remains unexplored how a profit-maximizing monopolist determine the dynamic price and review bonus together; and how the optimal bonus scheme and dynamic pricing interact with each other. 

In this paper, we study a mechanism design problem faced by a long-lived monopoly platform who sells either a risky product (with unknown utility) or a safe product (with known utility) to a sequence of short-lived risk-neutral consumers. Consumers receive a private signal about the utility of the risky product after purchasing, and incur a reporting cost to give his feedback interpreting as publishing the signal. So a consumer will give a feedback only when he receives a review bonus exceeding his cost, which is also the consumer's private information and known to the consumer before making the purchasing decision. Once a feedback is given, both the platform and the successors will see his private signal; hence there is incomplete social learning in the sense that the common belief about the utility of the risky product is updated according to the published private signals.

In this framework, we investigate how the monopolist commits to different prices of the two products and review bonuses over time to maximize its expected profits. It turns out that the bonus and pricing decisions are not independent and strategically interact with each other, due to following two reasons. First of all, when making the pricing decisions, the monopolist aims to guide the consumers to make the appropriate purchasing decisions. Due to the incompleteness of social learning, the bonus scheme has to be adjusted accordingly. If the monopolist wants to facilitate the learning about the risky product, the bonus should be set very high to guarantee that a lot of private signals are published and vice versa. Second, the bonus scheme also affects the pricing scheme because the bonus received from giving a feedback is also part of the payoff from purchasing the risky product. Therefore, if the review bonus is set very high, the monopolist does not need to lower the price of the risky product too much to motivate the purchase of this product.

We first consider a perfect learning model where the unknown utility of the risky product is perfectly observed after the purchase of this product. In this case, the monopolist faces a trade-off between the value and the cost of information: once knowing the utility of the risky product, he can better guide the consumers to make the appropriate purchasing decisions, but the probability of successfully getting information depends on the bonus he sets. Intuitively, there are two obviously feasible strategies: the monopolist can either set the highest bonus to get the information as soon as possible (Immediate Revelation) if the information is of enough value; or never motivate feedback at all (Non-Bonus) if he thinks the information is of no value. Besides these two extreme strategies, we find that the monopolist may also use a recursive bonus structure, where the optimal bonus decreases over time as the number of remaining consumers decreases and hence the information becomes less valuable. When using this recursive bonus strategy, the monopolist still needs to decide how to guide the purchase the products. Optimally, he may stick to selling the risky product until someone reports its utility (i.e., Full Coverage) or allow the purchase of the safe product as well (i.e., Partial Coverage). When the expected utility from the risky product is sufficiently high but not too high to support Immediate Revelation, the monopolist will prefer Full Coverage over Partial Coverage.


We then extend the basic model to allow imperfect learning via the exponential bandit model (\cite{KRC2005}, \cite{weng2015dynamic}). In a bad state, the risky product generates zero utility for sure, while in a good state, the risky product generates a positive utility with a random arrival time following an exponential distribution.  We characterize the monopolist's optimal pricing and bonus schemes as a function with respect to the belief about the good state. Our previous intuition still holds that the platform uses the optimal schemes to guide consumers to purchase the appropriate products and to reveal information in either moderate or extreme ways. The social learning trajectory in the exponential bandit model is such that the common belief about the good state either remains unchanged if the current agent does not purchase the risky product or make a report, falls because of a revelation of zero utility, or rises directly to 1 after a revelation of positive utility. Consistent with our findings in the basic model, with the decreasing of the common belief, the monopolist will optimally switch from Immediate Revelation to Full Coverage and then to Partial Coverage and Non-Bonus. 
Interestingly, we find that the optimal bonus is monotonically decreasing with respect to the belief, since the marginal value of one more experiment becomes lower when the monopolist believes that the state is more likely to be good. In this situation, although the monopolist expects a very quick revelation of the positive utility, which will immediately make the common belief jump to 1 and lead to even higher profits, he is also satisfied with the current high profits by selling the risky product, and do not want to bear the loss of bonus and the possible decrease of belief. As a result, a higher common belief leads to a lower optimal bonus. 

We conduct welfare analysis in the imperfect learning case, where we assume a social planner who can commit to an allocation rule and reporting rule, based on the common belief and agents' reporting cost. We prove that social planner will act like the monopolist, and the optimal strategy is again one of the possible optimal strategies used by the monopolist. A surprising result is that unlike traditional literature, there is no inefficient earlier termination of social learning, since the monopolist can always take advantage of Partial Coverage by setting a low enough bonus to encourage partial agents to reveal the information cheaply.  
Though stop at the same belief, we find that the social planner may still lead a more efficient experiment than the monopolist, in the sense that the social planner will encourage reporting with higher probability due to lower cost of experiments. The comparative static analysis shows that a higher learning speed  will lead to more experimentation of the risky product, and hence lower switching belief from Full Coverage to Partial Coverage. A higher utility from the safe product or a higher discount rate may lead to inefficiently earlier termination of the learning process. 
\section{Literature Review}

The literature on informational cascades that originated
with the work of \cite{banerjee1992simple} and \cite{bikhchandani1992theory} is
probably the closest to the model presented here. In these herding models, agents may directly follow actions ahead of him without referring to his own signal, which makes public information swamps private information and stops being updated. \cite{bose2006dynamic} and \cite{bose2008monopoly} study the monopoly pricing problems in herding situations. Our problem is different from their work because in our model, agents are not endowed with his own signal. With this assumption, \cite{kremer2014implementing} considers a social planner who can reveal information strategically to agents, in order to persuade agents to optimally do experiments. They assume the planner can take advantage of information asymmetry to exert Bayesian Persuasion, hence the possiblility of exploitation makes up to the experiment of agents' experiment. However, like e-commerce platforms, consumers' reviews are public to both the seller and the buyers, which makes Bayesian Persuasion infeasible. To our knowledge there is no relevant study, hence we try to fill this gap.

Agents in our model must choose from two arms like in a bandit problem setting originated with the work of \cite{rothschild1974bandit}. Unlike the single agent learning problem there, we assume there are agents involved sequentially, along the lines with works of \cite{bolton1999strategic}. In multi-agent setting with public information, the positive externality often rises free-rider problems, and social learning is always terminated insufficiently ( \cite{bergemann1996learning}, 
\cite{bergemann2000experimentation}, \cite{bergemann2006dynamic},\cite{hagiu2020data}). \cite{weng2015dynamic} consider the dynamic pricing with individual learning, where every agent may have his own type to effect his action, though of no value to learn by others. Different to his work, our paper considers the social learning part, where agents care about the common utility of the risky arm.

The literature of review motivation is also highly related to our work. Though it's with no doubt that online reviews help consumers to decrease the uncertainty of purchasing some certain goods, we can see that current review provision is lacking. Some works are about dealing with missing reviews and ratings (see \cite{ying2006leveraging} and \cite{he2016fast}), while some more related works focus on motivating more reviews from consumers (see \cite{li2010reputation}  and \cite{tadelis2016buying}). The idea of motivating reviews may be originated from the work of \cite{avery1999market}, where reporting cost is endogenously generated because of externality of information. When facing reporting cost, 
\cite{krishna2012voluntary} find that full participation may not be optimal and if full participation is compulsory, every one's truthful revelation may be not a Nash Equilibrium. \cite{osborne2020information} find that agents' threshold strategy limits the probability of revelation even when there are a large number of agents. But when transferring is plausible, agents' behavior may make their payoff attach first best. In our work, we use the monetary encouragement to cover agents' reporting cost in order to complete social learning.

Multiple studies have shown that monetary incentives are effective at encouraging wanted behavior online. For example, experimental studies show that financial incentives are effective
in motivating consumers to write reviews on online consuming platforms (\cite{fradkin2015bias}, \cite{cabral2015dollar}, \cite{khern2018extrinsic}). It is important to note that financial incentives offered in most such studies have generally been quite small, like \$ 1 or 2, or 0.5 RMB (\cite{cabral2015dollar}), \$ 0.04 in Amazon Mechanical Turk or 10 RMB in a Chinese online cloth retailing platform  (\cite{burtch2018stimulating}). Empiric works also show that paying for reviews do not effect consumers' purchasing behavior or the demand curve faced by the seller (\cite{cabral2015dollar}), and the feedback motivated by financial incentives are
often upward biased since consumers feel ``bribed" (\cite{cabral2015dollar}).
\section{Model Set Up}
\label{section_Imperfect_Learning}
There is a monopolistic seller offering two distinct products, denoted as $i=1,2$ (referred to as product 1 and 2 henceforth), to a sequence of risk-neutral agents in an infinite horizon, each with a lifetime of $\Delta\to0$ periods. In each period $\Delta$, a single agent is present. The production cost for each product is normalized to zero. The seller, being long-lived, aims to maximize its discounted profit over time, while short-lived agents, having outside utility of 0, seek to maximize their expected utility within period $\D$ by selecting at most one product from the available options. Product 2 is considered a safe product with a known utility, while product 1 is categorized as a risky product, whose utility is unknown to both buyers and the seller. In line with previous literature, such as \cite{KRC2005}, we label product 2 as the "safe arm" and product 1 as the "risky arm". 

Choosing the safe arm results in a deterministic flow utility of $s>0$, while the utility of choosing the risky arm is modeled in the following bandit setting: purchasing the risky arm can yield either a lump-sum value $z$ (good news) or zero utility, contingent on a hidden state of the world unknown to both the seller and buyers. There are two states, denoted as good (G) or bad (B). In the good state, the arrival rate of realizing value $z$ follows an exponential distribution with parameter $\lambda$, i.e., the probability density function is $\lambda e^{-\lambda t}$. In the bad state, arm 1 always results in zero utility. After purchasing the risky arm, buyers can privately learn their own utility under the risky arm: either a lump-sum $z$ or zero utility. Since $z$ can only be achieved under the good state, we assert that this good news is conclusive. We introduce a crucial assumption \ref{assumption_meaningful} here to ensure our discussion is non-trivial; otherwise, the risky arm is dominated by the safe arm.
\begin{assumption}
	$g\equiv \lambda z>s>0$
	\label{assumption_meaningful}
\end{assumption}

After purchasing and privately observing their own utility, buyers can decide whether to disclose their news to the public (they cannot privately report to the seller to create information asymmetry). Due to their short-lived nature, there is no strategic incentive for them to lie or conceal information. However, the act of reporting itself incurs a cost, with each buyer having a private reporting cost $c_i$, independently and identically distributed according to the distribution function $H(\cdot)$ with support $[0,\bar{c}]$. For simplicity, we assume that $H(\cdot)$ is strictly increasing and has a smooth density function $h(\cdot)$. Moreover, we impose the following standard assumptions \ref{Inada} and \ref{assumption:monotone_beta} on the density function $h(\cdot)$ and virtue value  $\beta(x)=x+\frac{H(x)}{h(x)}$:
\begin{assumption}
\label{Inada}
The density function $h(\cdot)$ satisfies:
$\lim\limits_{x\to 0^+} h(x)>0$, and  $\lim\limits_{x\to c^-} h(x)>0$.
\footnote{A more general assumption for part 2 in Assumption \ref{Inada} is that there exists some order $p$ such that $\lim\limits_{x\to \bar c^-}h^{(p)}(x)\neq 0$ if $\lim\limits_{x\to \bar c^-}h^{(k)}(x)= 0,\forall k<p$.Here if $\lim\limits_{x\to \bar c^-}h(x)=0$, form L Hospital's Rule $\lambda(\bar c)=M+\bar c+\lim\limits_{x\to \bar c^-}\frac{H(x)-H^2(x)}{h(x)}=M+\bar c-\lim\limits_{x\to \bar c^-}\frac{h(x)}{h'(x)}$ so if $\lim\limits_{x\to \bar c^-}h'(x)=0$ we can still get the same result by using  $\lim\limits_{x\to \bar c^-}h''(x)>0$ or so on. All we need is that the limit $\lim\limits_{x\to \bar c^-}\frac{h(x)}{h'(x)}$ is convergent to zero hence as long as there exist some order $p$ such that $\lim\limits_{x\to \bar c^-}h^{(p)}(x)\neq 0$ if $\lim\limits_{x\to \bar c^-}h^{(k)}(x)= 0,\forall k<p$, which always exists unless $h(x)$ is constant in a neighborhood of $\bar c$. }
\end{assumption}

\begin{assumption}
$\beta(x)$ is continuously differentiable and 	$\beta'(x)$ is always positive.
 \label{assumption:monotone_beta}
\end{assumption}
Since, in our setting, $H(\cdot)$ has support with an upper bound of $\bar{c}$, a potential strategy is to employ the "Immediate Revelation" strategy, which involves setting a bonus of $\bar{c}$ to ensure that every buyer reports at the outset. We introduce assumption \ref{assumption_barc_large} to rule out this strategy. Further discussion on this matter will be presented in the appendix.
\begin{assumption}
	$\bar c>(\frac{\lambda}{r}+1)(g-s)$
 \label{assumption_barc_large}
\end{assumption}

We assume that each buyer can observe the entire purchasing and reporting history. This assumption allows us to concentrate on the pricing and report motivating problems. Although this assumption is evidently strong, we find that this situation is quite common in online markets, where potential consumers have full access to previous consumers' feedback. If consumers' reporting is privately observed by the seller, information design problems arise. This allows the seller to exploit a dynamic Bayesian persuasion strategy to motivate agents to purchase the risky arm (see \cite{kremer2014implementing}). 
In this scenario, society will collectively form a common belief regarding the probability that the state is good. Denote common purchasing and reporting history from past observations before time $t$ as
$\mathscr{F}(t)$, a natural state variable is the common belief $\alpha_t=\Pr(\theta=G|\mathscr{F}(t))$, which is the common posterior probability that $\theta=G$.

\textbf{Timing.} We denote the agent residing in the interval $[t,t+\Delta)$ as agent $t$. $p_{1t}$ and $p_{2t}$ represent the prices for the risky arm and the safe arm at time $t$ respectively. If purchasing happens, the prices $p_{it}$ are paid at the beginning of this period. Moreover, if he decides to report when the bonus is $\tilde b_t$, he pays a deterministic flow cost $c_i$, and get bonus $\tilde b_t$ at the end of period $\Delta$\footnote{Hence the consumer has no incentive to report saying a good news earlier since the bonus is paid at the end of $\Delta$ no matter what he reports.}.
Thus, at time $t$, agent $t$ observes $p_{1t}$, $p_{2t}$, $\tilde{b}_t$, and the common belief $\alpha_t$ formed by the entire purchasing and reporting history before $t$.  Additionally, we consider a common exponential discount factor with discount factor $r$; that is, a payoff $v$ received at time $t$ generates a utility of $e^{-rt} v$ at time 0 both for the buyer and seller.

\textbf{Buyer's Surplus.} If agent $t$ purchases safe arm, he would have surplus \[\int_0^{\Delta} e^{-rt}s  dt-p_{2t} \]
If he chooses the risky arm, he may get the lump sum utility $z$ which may realize at any time $t$ with density $\lambda e^{-\lambda t}$ within $\Delta$. That is to say, if the consumer chooses risky arm, his expected surplus is
\[\int_0^{\Delta} e^{-rt}\lambda e^{-\lambda t} \alpha z dt-p_{1t} \] if he doesn't report. Otherwise, if he reports, the expected surplus is 
\[\int_0^{\Delta} e^{-rt}\lambda e^{-\lambda t} \alpha z  dt-p_{1t}-\int_{0}^{\Delta} e^{-rt} c_i dt+e^{-r\Delta} \tilde b_t\]

\textbf{Seller's Policy.}
The seller's problem is to determine the pricing and bonus scheme. We permit the seller to commit to a contingent bonus scheme $\tilde{b}_t(\alpha_t)$ and pricing scheme $p_{1t}(\alpha_t), p_{2t}(\alpha_t)$ before the game starts. However, we do not allow bonuses to be contingent on the polarity of the feedback. This assumption rules out the trivial strategy of bribing for reputation.  It's also a reasonable assumption, since in most online purchasing platforms, the seller has no access to forbid consumers to provide negative feedback or complaints.  Without loss of generality, we focus on pure strategies. Moreover, we limit our attention in Markovian strategies and equilibrium.

Within period $[t,t+\D)$, the seller can only sell at most one of the two products to agent $t$. Since he is the monopolist and buyer's outside utility is 0, the seller can always extract all the consumer surplus and sell whatever product he wants, by decreasing the target price with $\epsilon\to 0^+$ to create a positive surplus. For example, the seller can set $p_2(\alpha_t)=\frac{1-e^{-r\Delta}}{r}s$ and $p_1(\alpha_t)=\frac{\lambda(1-e^{-(r+\lambda)\Delta})}{\lambda+r}\alpha_t z-\epsilon$ to guide all buyers to buy risky arm. The seller can also $p_1=\frac{\lambda(1-e^{-(r+\lambda)\Delta})}{\lambda+r}\alpha_t z+\epsilon$ to guide all buyers who won't report to buy safe arm. 

Without loss of generality\footnote{For example, in \cite{bergemann2000experimentation}, safe product and risky product are provided by two sellers, and they consider the Bertrand competition to do tie breaking.}, we assume the seller will price fairly and consumers do tie breaking according to seller's policy $\ell(\alpha_t)$, where seller's policy $\ell:[0,1]\to \{1,2\}$ stands for which product he would sell to agents who doesn't report. 
In other words, we always set
\begin{equation}
\begin{aligned}
p_{1t}=&\int_0^{\Delta} e^{-rt}\lambda e^{-\lambda t} \alpha z dt=\frac{\lambda(1-e^{-(r+\lambda)\Delta})}{\lambda+r}\alpha z\\
        p_{2t}=&\int_0^{\Delta} e^{-rt}s  dt=\frac{1-e^{-r\Delta}}{r}s
    \label{price}
\end{aligned}
    \tag{Fair Pricing}
\end{equation}
and allows the seller to implement a common belief contingent bonus scheme $b_t(\alpha_t)$ and selling policy $\ell(\alpha_t)$. If the seller sets $\ell(\alpha_t)=2$ in the interval $[t_1,t_2]$, we can expect that, along the equilibrium path, almost all agents will opt for the safe arm, with only a fraction choosing the risky arm and submitting a report. Therefore, we label the case where $\ell(\alpha_t)=1$ as ``promoting the risky arm" and $\ell(\alpha_t)=2$ as ``promoting the safe arm" for simplicity. 

The seller should also decide the bonus scheme, which is a measurable function $\tilde b:[0,1]\to \mathbb R$, where $\tilde b(\alpha_t)$
stands for the bonus he would pay to agent $t$ at time $t+\D$, if agent $t$ reports his utility of purchasing the risky arm.  
The seller's stage payoff within $[t,t+\D)$ depends on his action and agent $t$'s action:
\[  \pi(\D,b_t,\ell_t)=\left\{
\begin{aligned}
	& p_{1t} & \text{ if } \ell_t=1,r_t=0 \\
	& p_{1t} - e^{-r\D} \tilde b_t & \text{ if } \ell_t=1,r_t=1\\
	& p_{2t} &\text{ if } \ell_t=2
\end{aligned} \right.
\label{equation_seller_payoffl} \]
~\\
~\\
After simplification, the dynamic game consists of three components: the seller's promoting  policy $\ell:[0,1]\to \{1,2\}$ and bonus scheme $\tilde b:[0,1]\to \mathbb R$ , and the buyers' reporting strategies $r_t:[0,\bar c]\times \mathbb R\to \{0,1\}$. We would define our equilibrium in \ref{def_mpe} after we define the seller's value function.

\section{Model Analysis}
\label{section_Model_Analysis}
Our main result is Proposition \ref{optimal_switching_proposition} and \ref{proposition_speed}. Proposition \ref{optimal_switching_proposition} characterize the monopolist's optimal linkage between pricing and bonus setting, as well as the optimal switching between strategies and optimal stopping rules. Proposition \ref{proposition_speed} shows that the learning process led by the monopolist will terminate at the first best cutoff, while the learning speed is inefficiently low.  
\subsection{Seller's Strategy}
\label{Section_preliminary_analysis}
In general, there are four possible combinations of this selling policy and bonus scheme, as shown in Table \ref{table_strategy}.

\begin{table}[h]
    \centering
    \begin{tabular}{|c|c|c|}\hline
         within $[t,t+\D)$: &promoting risky arm  &promoting safe arm \\\hline
         $\bar c>b_t>0$& Full Coverage & Partial Coverage\\ \hline
         $b_t=0$& Non Bonus & Safe Arm\\ \hline
    \end{tabular}
    \caption{$2\times 2$ strategies}
    \label{table_strategy}
\end{table}

By setting $b_t=0$, the promoting policy uniquely determines the outcome. If the seller is promoting risky/safe arm, for sure the current agent $t$ will purchase the risky/safe arm. And there will be no learning because of no reporting, so we denote these stage strategies as ``Non Bonus'' (NB) and ``Safe Arm'' (SA) separately. The stage payoff of using NB and SA are:
\[\pi_{\text{NB}}(\D,b_t=0)=p_{1t}, \pi_{\text{SA}}(\D,b_t=0)=p_{2t}\]

By setting some $b_t>0$ and mainly selling risky arm, the seller has some probability to motivate the current agent to report, otherwise the current agent purchases the risky arm and not report. In this case, the seller uses risky arm to fully cover the market demand, hence we denote this strategy as ``Full Coverage'' (FC). Under Full Coverage, the agent with reporting cost $c_i$ will purchase the risky arm for sure and report if and only if 
\begin{equation}
    e^{-r\Delta}\tilde b_t-\int_0^{\Delta} e^{-rt } c_i dt\ge 0\label{report_condition}
\end{equation}
Hence by setting $\tilde b_t$ (or equivalently, setting $b_t\equiv \frac{r}{1-e^{-r\Delta}} \tilde b_t$), the ex-ante probability of successfully motivating the reporting is 
\[\pr(\text{report})=H(\frac{e^{-r\Delta}r}{1-e^{-r\Delta}} \tilde b_t)\equiv H(e^{-r\Delta} b_t)\]
Hence for the seller the immediate surplus is then 
\begin{equation}
\begin{aligned}
        \pi_{\text{FC}}(\Delta,b_t) &=p_{1t}-e^{-r\Delta}\tilde b_t\pr(\text{report}) \\&= \frac{\lambda}{r+\lambda} (1-e^{-(r+\lambda)\Delta})\alpha z -H(e^{-r\Delta}b_t) b_t\frac{1-e^{-r\Delta}}{r} e^{-r\Delta}
\end{aligned}
    \label{pi_FC_Delta}
\end{equation}

If the seller is mainly selling safe arm, then positive bonus may affect market demand. If the current agent has reporting cost $c_i$ and 
\begin{equation}
    \int_0^{\Delta} e^{-rt}\lambda e^{-\lambda t} \alpha z  dt-p_{1t}-\int_{0}^{\Delta} e^{-rt} c_i dt+e^{-r\Delta} \tilde b_t\ge \int_0^{\Delta} e^{-rt}s  dt-p_{2t} 
    \label{report_condition_PC}
\end{equation}
the current agent will purchase the risky arm and report  for sure. Otherwise he will purchase the safe arm in favor of partial coverage policy. Hence on the equilibrium path, all agents who doesn't report would purchase the safe arm, while those who purchases the risky arm would report for sure.  We call this strategy ``Partial Coverage'' (PC) in the sense that the seller is still learning the quality of risky arm, but only using it partially covering the market demand. 
The simplified expression of \eqref{report_condition_PC},  incorporating pricing and tie-breaking, leads to the same condition as \eqref{report_condition}. This indicates that, given an identical bonus scheme, agents who would report under Full Coverage will do so if and only if they would also report under Partial Coverage. So the probability of motivating the current agent to report would be the same.
Hence we can get
\begin{equation}
    \pi_{\text{PC}}(\Delta,b_t)=\pr(\text{report}) [p_{1t}-e^{-r\Delta}\tilde b_{t}] +[1-\pr(\text{report})] p_{2t}
    \label{pi_PC_Delta}
\end{equation}

\subsection{Seller's Value Function}
If it's already known that the state is good, value function of keeping selling $R_1$ will be:
\[
\tilde V=\int_0^{\Delta} e^{-rt} (\lambda e^{-\lambda t})z dt+e^{-r\Delta}\tilde V
\]
Let $\Delta \to 0$ we can get $\tilde V=\frac{\lambda z}{r}$. Similarly, keeping selling $R_2$ leads to payoff $\tilde U=\frac{s}{r}$, which is the value function of using Safe Arm strategy. With the same logic, the value function of using Non Bonus strategy should be $\frac{\alpha \lambda z}{r}$.

Plug $\D\to0$ in equation \ref{pi_FC_Delta} and \ref{pi_PC_Delta}, we can get
\begin{equation}
    \lim_{\Delta\to 0} \frac{\pi_{\text{FC}}(\Delta,b_t)}{\Delta} = \lambda \alpha z-H(b_t)b_t
    \label{pi_fc}
\end{equation}
and 
\begin{equation}
     \lim\limits_{\Delta\to 0}\frac{\pi_{\text{PC}}(\Delta,b_t)}{\Delta}
    = H(b_t) [\lambda\alpha z-b_t] +[1-H(b_t)] s
    \label{pi_pc}
\end{equation}

As a conclusion of equation \eqref{pi_fc} and \eqref{pi_pc}, let $\pi_{L}(\Delta,b_t)$ stands for seller's expectation of immediate net profit within period $\Delta$ using strategy $L\in \{\text{FC,PC,NB,SA}\}$. If seller's policy at time $t$ is $(L,b_t)$ then the period profit fits
\begin{equation}
    \lim_{\Delta\to0}\frac{\pi_{L} (\Delta,b_t)}{\Delta}=\left\{
\begin{aligned}
	&\lambda \alpha z-H(b_t)b_t & \text{ if } L=\text{FC}\\
	& H(b_t) [\lambda \alpha z-b_t] +[1-H(b_t)] s & \text{ if } L=\text{PC}\\
	& \lambda \alpha z &\text{ if } L=\text{NB}\\
	& s &\text{ if } L=\text{SA}
\end{aligned} \right.
\label{equation_pi_1l}
\end{equation}
And the probability of reporting the utility of purchasing risky arm is
\[P_L(b_t) = \lim_{\Delta\to0} P_L(\Delta, b_t)=\left\{
\begin{aligned}
	&H(b_t)& \text{ if } L=\text{FC or PC}\\
	&0 &\text{ if } L=\text{NB or SA}
\end{aligned} \right.\]

Hence the seller's value function $\Pi(\cdot)$ follows the dynamic process:
\begin{equation}
\begin{aligned}
    \Pi(\alpha)=&\max\limits_{L,b} \pi_{L} (\Delta,b_t)+P_L(\Delta, b)e^{-r\Delta}\gamma(\alpha,\Delta) \Pi(\alpha')+\\
    &P_L(\Delta, b) e^{-r\Delta}(1-\gamma(\alpha,\Delta)) \tilde V
    +(1-P_L(\Delta, b))e^{-r\Delta}\Pi(\alpha)
\end{aligned}  
    \label{equation_max}
\end{equation}
where \[\gamma(\alpha,\Delta)=1-\alpha(1-e^{-\lambda \Delta})\] stands for the probability that in time period $\Delta$ the current agent reports a zero utility and then leads to belief updating to $\alpha'$, and  $1-\gamma(\alpha,\Delta)$ stands for the probability that within $\Delta$ the agent reports a utility $z$ which makes the common belief jump to 1 immediately. 
$P_L(\Delta,b)=\pr(\text{report})$ stands for the probability of motivating the current agent to report successfully under strategy $L$. The common belief will follow the Bayesian updating rule
\[
\alpha'=\frac{\alpha e^{-\lambda \Delta}}{\alpha e^{-\lambda \Delta}+1-\alpha}
\]

Let $\D\to 0$ (for more details, see Appendix \ref{appendix_proof_pre}), we can get $\Pi(\alpha)$ actually follows the ordinary differential equation based on equation (\ref{equation_max}):
\begin{equation}
    r \Pi(\alpha)=\max\limits_{L,b}\lim_{\Delta\to0}\frac{\pi_{L} (\Delta,b_t)}{\Delta}+P_L(b) \frac{\alpha \lambda^2 z }{r}-P_L(b)\alpha \lambda \Pi(\alpha)+P_L(b)(\alpha^2-\alpha)\lambda \Pi'(\alpha)
    \label{ODE_vf}
\end{equation}

As a conclusion, the seller's strategy profile ($\ell,\tilde b$) is equivalent to set $(L,b)$. With this notation, we define the equilibrium as follows:
\begin{definition}{(Markov Perfect Equilibrium)} A collection of strategies $\{\ell^*,b^*,r_t^*\}$ is a Markov Perfect Equilibrium if:
\begin{enumerate}
    \item Agent $t$ maximizes his utility, i.e., \eqref{report_condition} holds, $\forall t$.
    \item Seller maximizes his lifetime utility defined in \eqref{ODE_vf}.
\end{enumerate}
\label{def_mpe}
\end{definition}

In the left of this section, we characterize the optimal pricing and bonus scheme for the monopolist. To do so we first derive the optimal pricing and bonus scheme for the monopolist with bounded rationality, in the sense that he can use Full Coverage or Partial Coverage, with optimally stopping experiments and switching to Non Bonus and Safe Arm. As a main result here, Fact \ref{Fact_intersect} shows that these two value functions will intersect with a kink, hence the fully rational monopolist should optimally switching between Full Coverage and Partial Coverage, which is concluded in Proposition \ref{optimal_switching_proposition}. Then we characterize the optimal pricing and bonus scheme for a social planner for welfare analysis in terms of efficient termination and learning speed.
 
 \subsection{``Naive'' Monopolist}
 We first consider a simple scenario where the seller can only use one of two interior strategies: either $\FCO$ or $\PCO$. Additionally, the seller knows how to optimally switch between these interior strategies to the corner solution, namely, the Non-Bonus or Safe Arm strategy. We refer to this type of seller as a ``naive'' monopolist. In contrast, a fully rational seller who knows how to optimally switch among all strategies is termed an ``optimal switching monopolist''. Introducing the ``naive'' monopolist has its advantages, as analyzing its value function helps us study the possible patterns and directions for optimal switching.

 Firstly we consider ``Full Coverage Only'' monopolist (i.e. the seller can only use Full Coverage, Safe Arm or Non bonus). We denote $V(\alpha)$ as the value function of using Full Coverage. Combine equation (\ref{equation_pi_1l}) and (\ref{ODE_vf}), we can get 
 \begin{equation}
    r V(\alpha)=\max\limits_{b\in[0,\bar c]} \lambda \alpha z[1+H(b)\frac{\lambda}{r}]-H(b)b-H(b)\alpha \lambda V(\alpha)+H(b)(\alpha^2-\alpha)\lambda V'(\alpha)
    \label{result ode}
 \tag{FC-ODE}
\end{equation}
 which is a standard Bellman Equation. Followed by principal of optimality (\cite{SLP}), the value function $V(\cdot)$ is well-behaved, in the sense of continuous, differentiable, and convex. Since we only consider the possibility that the monopoly can stop doing Full Coverage experiments and switch to the Safe Arm or Non Bonus, the optimal stopping problem can be easily captured by smooth pasting condition. The following lemma describes the strategy under this setting:
 
 \begin{lemma}
 The ``Full Coverage Only'' monopolist uses a cutoff strategy:
 there exists $0<\underline \alpha_1<\hat \alpha_{FC}^{NB}<1$ such that he uses NB for $\alpha>\hat \alpha_{FC}^{NB}$, uses 
 SA for $\alpha<\underline \alpha_1$. Otherwise he uses FC and sets $b_t(\alpha)>0$ for $\underline \alpha_1\le \alpha \le \hat \alpha_{FC}^{NB}$. Denote $\underline b_1$ as the bonus at belief $\underline \alpha_1$, then $b_t(\alpha),V(\alpha)$ and $\underline \alpha_1,\underline b_1$ fit the following relationships:
 	\begin{enumerate}
 		\item 	First order condition: \[
 		\beta(b)=\frac{\alpha z \lambda^2 }{r}+(\alpha^2-\alpha)\lambda V'(\alpha)-\alpha\lambda V(\alpha)
 		\]
 		\item Bellman equation: \[
 		rV(\alpha)=\alpha z \lambda+\frac{H^2(b)}{h(b)}
 		\]
 		\item Optimal Stopping: \[\left\{
 		\begin{aligned}
 			& \beta(\underline b_1)=\frac{\underline \alpha_1  \lambda}{r} (z\lambda-s) \\
 			& s =\underline \alpha_1  z \lambda+ \frac{H^2(\underline b_1)}{h(\underline b_1)}
 		\end{aligned} \right.\]
 		\item Monotonically Decreasing: $b(\alpha)$ is monotonically decreasing with respect to $\alpha$. 
 	\end{enumerate}
  \label{fconly_lemma}
 \end{lemma}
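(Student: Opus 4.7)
The plan is to verify the four relationships sequentially, starting from the Bellman equation \eqref{result ode}; the preceding discussion has already granted well-definedness, $C^1$ regularity, and convexity of $V$ via the principle of optimality, and endorses smooth pasting as the optimal-stopping condition.

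\textbf{First}, I would derive claim (1) by differentiating the RHS of \eqref{result ode} with respect to $b$. The only $b$-dependence lies in the factor $H(b)\!\left[\tfrac{\alpha z\lambda^2}{r}-b-\alpha\lambda V(\alpha)+(\alpha^2-\alpha)\lambda V'(\alpha)\right]$; differentiating, dividing by $h(b)$, and rearranging yields
\[
\beta(b)=b+\frac{H(b)}{h(b)}=\frac{\alpha z\lambda^2}{r}+(\alpha^2-\alpha)\lambda V'(\alpha)-\alpha\lambda V(\alpha).
\]
Assumption \ref{assumption:monotone_beta} (equivalently, $2h^2>Hh'$) is precisely the second-order condition, so this interior FOC identifies the unique maximizer. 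Substituting it back into \eqref{result ode} causes the bracketed expression to equal $\beta(b)$, so the $H(b)$-block collapses to $H(b)[\beta(b)-b]=H^2(b)/h(b)$, yielding claim (2): $rV(\alpha)=\alpha z\lambda + H^2(b)/h(b)$.

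\textbf{Next}, for claim (3) I would apply value matching and smooth pasting at the lower boundary. Since $V\equiv s/r$ under SA, continuity gives $V(\underline\alpha_1)=s/r$; substituting into the simplified Bellman gives $s=\underline\alpha_1 z\lambda+H^2(\underline b_1)/h(\underline b_1)$. Smooth pasting against a constant gives $V'(\underline\alpha_1)=0$; substituting into the FOC gives $\beta(\underline b_1)=\tfrac{\underline\alpha_1\lambda}{r}(z\lambda-s)$, which is positive by Assumption \ref{assumption_meaningful} and therefore consistent with $\underline b_1>0$. The existence of $\underline\alpha_1\in(0,1)$ follows because the simplified Bellman curve $\alpha z\lambda+H^2(b(\alpha))/h(b(\alpha))$ is continuous, strictly less than $s$ near $\alpha=0$, and exceeds $s$ for $\alpha$ close to $1$. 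An analogous pasting at the upper boundary against the NB value $\alpha\lambda z/r$ forces $V'(\hat\alpha_{FC}^{NB})=\lambda z/r$, which substituted into the FOC yields $\beta(b)=0$, so FC merges continuously into NB at $\hat\alpha_{FC}^{NB}$.

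\textbf{Finally}, for claim (4) I would differentiate the simplified Bellman to get $rV'(\alpha)=\lambda z+g'(b)b'(\alpha)$, where $g(b)\equiv H^2(b)/h(b)$. A short calculation yields $g'(b)=H(b)\beta'(b)$. Eliminating $V$ from the FOC via $\alpha\lambda V=\tfrac{\alpha^2\lambda^2 z}{r}+\tfrac{\alpha\lambda g(b)}{r}$ (from the simplified Bellman) and $V'$ via $\lambda z-rV'=-H(b)\beta'(b)b'(\alpha)$ produces the identity
\[
\beta(b)+\frac{\alpha\lambda H^2(b)}{r\,h(b)}=-\frac{\alpha(1-\alpha)\lambda H(b)\beta'(b)}{r}\,b'(\alpha).
\]
For $\alpha\in(\underline\alpha_1,\hat\alpha_{FC}^{NB})$ and $b>0$ the LHS is strictly positive, while the coefficient on the RHS is strictly negative by Assumption \ref{assumption:monotone_beta}, forcing $b'(\alpha)<0$.

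The main obstacle I anticipate is the interior-existence argument for the two cutoffs, where one has to combine Assumptions \ref{assumption_meaningful} and \ref{assumption_barc_large} with comparative statics of the simplified Bellman to rule out degenerate boundaries (in particular, to ensure that neither NB nor SA dominates globally and that $\underline b_1<\bar c$, so that the FOC solution stays interior in $b$). The algebra of steps (1)--(2) and (4) is routine once the FOC is in hand; the most delicate piece of calculation is the chain of substitutions producing the monotonicity identity in step (4).
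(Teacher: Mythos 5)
Your proposal is correct and follows essentially the same route as the paper: the first-order condition in $b$ from \eqref{result ode}, back-substitution to collapse the Bellman equation to $rV=\alpha z\lambda+H^2(b)/h(b)$, value matching and smooth pasting against $s/r$ (and against $\alpha\lambda z/r$ for the Non-Bonus boundary), and implicit differentiation of the FOC/Bellman pair to sign $b'(\alpha)$ — and your monotonicity identity checks out exactly. The only points you leave implicit (that $\underline\alpha_1<\hat\alpha_{FC}^{NB}$ follows from $\underline b_1>0$ together with monotonicity, and that Assumption \ref{assumption_barc_large} keeps the maximizer off the corner $b=\bar c$) are minor and consistent with the paper's treatment.
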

 
Lemma \ref{fconly_lemma} claims that the monopolist is less willing to motivate agents' learning when the common belief is higher. Intuitively, there are two powers affecting bonus setting. When the belief is higher, the monopolist would think it's more possible to jump to 1 and hence get higher further profit, which provides high incentive to set high bonus (exploration). However, if the belief is already high enough, the seller takes advantage of high common belief, hence setting high bonus may increase the risk that the common belief drops under reporting, which hurts the future profit (exploitation). Lemma \ref{fconly_lemma} shows the latter power overcomes the first one when $\alpha$ is in the area that using Full Coverage is optimal. Actually as we will see later, the first concern can be solved by using PC. A natural result is
 \[s>\underline{\alpha_1} z \lambda\]
 which means the $\FCO$ monopolist would like to encourage agents to test the risky arm even when it yields lower expected utility than safe arm. That's the natural result of leading a social learning process, even though the monopolist can not optimally switch between all potential strategies. In the formal model, the seller may switch from FC to PC (instead of SA), which will be discussed in section \ref{section_optimal_switching}.
 
 Last but not least, we should emphasis the possibility of using Non Bonus. From interior solution got from part 1 in Lemma \ref{fconly_lemma}, Full Coverage will generate zero bonus when 
 \[\frac{\alpha z \lambda^2}{r} +(\alpha^2-\alpha)\lambda V'(\alpha)-\alpha\lambda V(\alpha)\le 0\]
 Combined with part 2 in Lemma \ref{fconly_lemma},it's equivalent (in the interior solution part) to 
 \begin{equation}
     \alpha> \hat\alpha^{NB}_{FC}\footnote{Here we use $\hat\alpha^{NB}_{FC}$ since it's not the true switching cutoff for the optimal switching monopolist, since the value function for the optimal switching monopolist has the different boundary condition, even though they fit the same relationship between $V'$ and $V$ mentioned in \ref{fconly_lemma}.}
     \label{Full Coverage Non Bonus}
 \end{equation}

 where $ \hat\alpha^{NB}_{FC}$ is the solution of $V(\alpha)=\frac{\alpha z \lambda}{r}$.   Equation  \ref{Full Coverage Non Bonus} shows that the monopolist would prefer Non Bonus to Full Coverage for $\forall \alpha\in(\hat\alpha^{NB}_{FC},1)$. In other words, if the temporary common belief $\alpha_t\in(\hat\alpha^{NB}_{FC},1)$, the monopolist will sell risky arm to each agent and won't motivate them to report the utility. As a result, the common belief would never change since no social signals are generated in this case. As mentioned before, the monopolist would set lower bonus when common belief is higher, given the common belief $\alpha<\hat\alpha^{NB}_{FC}$. In this period, the monopolist is motivating agents to do more experiments). Otherwise when $\alpha\in(\hat\alpha^{NB}_{FC},1)$, the monopolist would not set any belief though it's possible that some agents come and report a positive utility and push the common belief directly to 1, since he can sell the risky arm 
and take advantage of the pretty high common belief.

 Secondly, we consider $\PCO$ monopolist setting. As before, the value function $U(\alpha)$ must fit the following ordinary differential equation:
 \[
 r U(\alpha)=\max_b H(b) (\lambda \alpha z-b) +[1-H(b)]s+H(b)\frac{\alpha \lambda^2 z}{r}+H(b)(\alpha^2-\alpha)\lambda U'(\alpha)-H(b)\alpha \lambda U(\alpha)
 \tag{PC-ODE}
 \]
 
 And the following lemma characterize the solution of optimal stopping problem of switching from $U(\cdot)$ to $\tilde U$

 \begin{lemma}
 The ``Partial Coverage Only'' monopolist uses a cutoff strategy:
 there exists $0<\underline \alpha_2<1$ such that he uses PC and sets $b_t(\alpha)>0$ for $\alpha>\underline \alpha_2$, and uses 
 SA for $\alpha<\underline \alpha_2$. Denote $\underline b_2$ as the bonus at belief $\underline \alpha_2$, then $b_t(\alpha),U(\alpha)$ and $\underline \alpha_2,\underline b_2$ fit the following relationships:
 	\begin{enumerate}
 		\item 	First order condition: 
            \[ \beta(b)=\lambda \alpha z -  s+\frac{\alpha z \lambda^2}{r}+(\alpha^2-\alpha)\lambda U'(\alpha)-\lambda \alpha U(\alpha)
 		\]
 		\item Bellman equation: \[
 		r U(\alpha)=s+\frac{H^2(b)}{h(b)} 
 		\]
 		\item Optimal Stopping: \[\left\{
 		\begin{aligned}
 			&\underline b_2=0\\
 			&\underline \alpha_2=\frac{rs}{\lambda(g-s)+rg}
 		\end{aligned} \right.\]
 		\item Monotonically Increasing: $b(\alpha)$ is monotonically increasing with respect to $\alpha$. 
 	\end{enumerate}
  \label{pconly_lemma}
 \end{lemma}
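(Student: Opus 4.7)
The plan parallels the proof of Lemma \ref{fconly_lemma}. I would begin by computing the first-order condition in the PC-ODE with respect to $b$: differentiating the expression inside the max and setting the derivative equal to zero yields $h(b)[\lambda\alpha z - b - s + \frac{\alpha\lambda^2 z}{r} + (\alpha^2-\alpha)\lambda U'(\alpha) - \alpha\lambda U(\alpha)] = H(b)$. Dividing by $h(b)$ and rearranging gives precisely part 1, using the definition $\beta(b) = b + H(b)/h(b)$.

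Next, for part 2, I would substitute the FOC back into the PC-ODE. Writing the maximand as $s + H(b)\cdot[\lambda\alpha z - b - s + \frac{\alpha\lambda^2 z}{r} + (\alpha^2-\alpha)\lambda U'(\alpha) - \alpha\lambda U(\alpha)]$, the bracketed quantity equals $\beta(b) - b = H(b)/h(b)$ by the FOC, so the Bellman collapses to the compact form $rU(\alpha) = s + H^2(b)/h(b)$.

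For part 3, at the free boundary $\underline{\alpha}_2$ the outside (Safe Arm) value is $\tilde U = s/r$, so value matching requires $U(\underline{\alpha}_2) = s/r$; plugging into part 2 forces $H^2(\underline{b}_2)/h(\underline{b}_2) = 0$, and Assumption \ref{Inada} (which bounds $h$ away from zero near the origin) forces $\underline{b}_2 = 0$. Smooth pasting then gives $U'(\underline{\alpha}_2) = 0$. Substituting $b = 0$ (so $\beta(0) = 0$), $U = s/r$, and $U' = 0$ into the FOC reduces to a linear equation in $\underline{\alpha}_2$ whose solution, after using $g = \lambda z$, is $\underline{\alpha}_2 = rs/[\lambda(g-s) + rg]$.

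For part 4, the key identity is $\frac{d}{db}[H^2(b)/h(b)] = H(b)\beta'(b)$, verified by writing $H^2/h = H \cdot (H/h)$ and using the product rule together with $\frac{d}{db}[H/h] = \beta'(b) - 1$; this derivative is strictly positive by Assumption \ref{assumption:monotone_beta}. Combined with part 2, $b(\alpha)$ moves monotonically in the same direction as $U(\alpha)$, so it suffices to show $U$ is strictly increasing on $(\underline{\alpha}_2, 1]$, which follows from standard convexity of the value function in beliefs together with the smooth-pasting condition $U'(\underline{\alpha}_2) = 0$. The main obstacle will be rigorously establishing the cutoff structure itself—namely that PC strictly dominates SA for $\alpha > \underline{\alpha}_2$ and SA strictly dominates for $\alpha < \underline{\alpha}_2$—and verifying the convexity of $U$ in our exponential-bandit setting; Assumption \ref{assumption_meaningful} ($g > s$) is what guarantees the continuation region $(\underline{\alpha}_2, 1)$ is nonempty.
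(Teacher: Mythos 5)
Your derivation is correct and follows essentially the same route as the paper's: the first-order condition in the PC-ODE, back-substitution via $\beta(b)-b=H(b)/h(b)$ to get the compact Bellman form, value matching and smooth pasting at $\underline\alpha_2$ (with Assumption \ref{Inada} forcing $\underline b_2=0$ and yielding the closed form for $\underline\alpha_2$), and the identity $\frac{d}{db}\left[H^2(b)/h(b)\right]=H(b)\beta'(b)$ combined with convexity of $U$ and $U'(\underline\alpha_2)=0$ for monotonicity. The checks you flag at the end (existence of the cutoff/continuation region and convexity of $U$) are handled in the paper by appeal to the principle of optimality for the Bellman equation, so no substantive gap remains.
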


A noticeable fact is that in Partial Coverage the net bonus $R$ is monotonically increasing with respect to $\alpha$, unlike Full Coverage. As a result, $\PCO$ will not optimally switch to Non Bonus, even though he is allowed in this setting. A more detailed discussion about this fact is illustrated in the Figure \ref{fig:pqplot} later. The other fact is that $\underline b_2=0$. We will see in section \ref{section_welfare} that this result will be the key force driving our main welfare analysis.
Lemma \ref{fconly_lemma} and \ref{pconly_lemma} together describe the value function of the Naive Monopolist. A quick illustration is in Figure \ref{wrong_fc_pc}.  
 \begin{figure}[H]
 	\centering  
 	{\label{Fig.sub.1}
 	\includegraphics[width=0.5\textwidth]{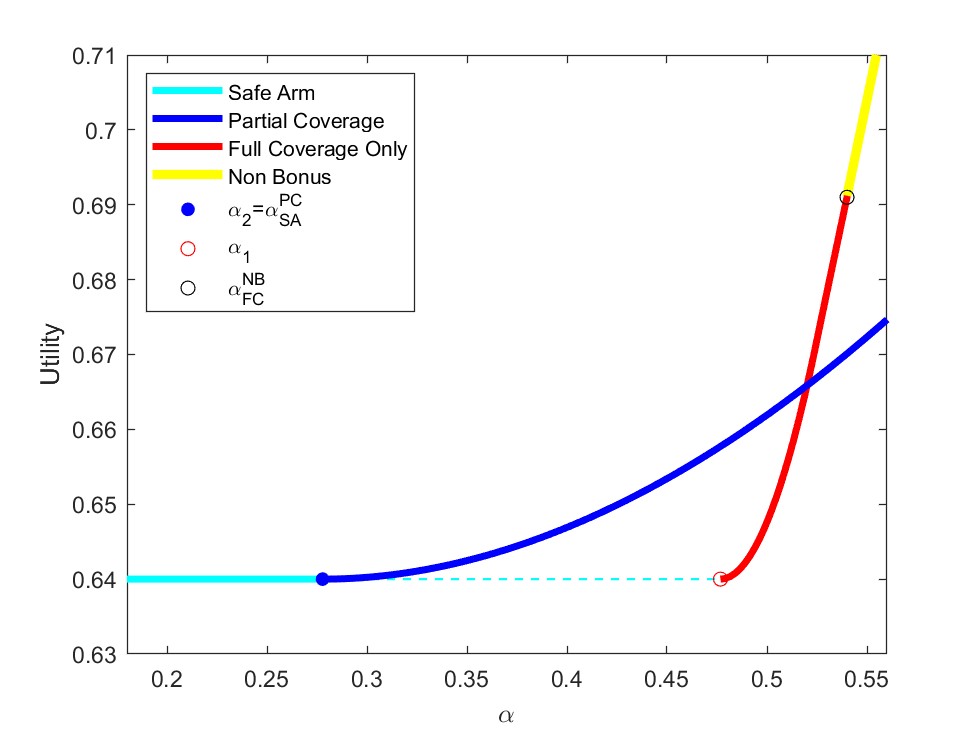}}
 	\caption{Value Functions of the Naive Monopolist}
 	\label{wrong_fc_pc}
 \end{figure}

 \subsection{Optimal Switching Monopolist}
 \label{section_optimal_switching}

In this section we consider whether the fully rational monopolist should switch from Full Coverage into Partial Coverage, or directly into Safe Arm when the belief is decreasing. From Figure \ref{wrong_fc_pc} we can see a kink for value function for Full Coverage only monopolist and Partial Coverage only monopolist. We have a claim to show it's true in general.
\begin{fact}
    The value function for $\FCO$ monopolist and  $\PCO$ monopolist always intersect.
    \label{Fact_intersect}
\end{fact}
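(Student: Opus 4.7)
The plan is to apply the intermediate value theorem to the continuous function $V-U$ on the interval $[\underline{\alpha}_1, 1]$, showing it takes strictly opposite signs at the two endpoints. Because both value functions coincide with $s/r$ below the smaller of the two stopping beliefs, the natural candidate for a crossing lies between $\underline{\alpha}_1$ (where FCO switches to Safe Arm) and $\alpha=1$ (where Non-Bonus is available only to FCO).

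For the right endpoint I would argue $V(1)>U(1)$. At $\alpha=1$ the belief is absorbing under both programs, so the FCO monopolist sets $b=0$ and collects the full risky flow, giving $V(1)=g/r$. Under Partial Coverage the per-period flow is $H(b)(g-b)+(1-H(b))s=s+H(b)(g-s-b)$, and a short case check on $b\le g-s$ versus $b>g-s$ shows this is strictly less than $g$ for every $b\ge 0$. Substituting the maximum of this flow into the Bellman relation from Lemma \ref{pconly_lemma} and solving the resulting linear equation in $U(1)$ (the derivative term vanishes since $\alpha^2-\alpha=0$ at $\alpha=1$) yields $U(1)<g/r$.

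For the left endpoint I would show $V(\underline{\alpha}_1)<U(\underline{\alpha}_1)$. Optimal stopping gives $V(\underline{\alpha}_1)=s/r$, so it suffices to prove $\underline{\alpha}_1>\underline{\alpha}_2$, since by Lemma \ref{pconly_lemma} the PCO value strictly exceeds $s/r$ on $(\underline{\alpha}_2,1]$. Eliminating $H/h$ from Lemma \ref{fconly_lemma}(3) via $H(\underline{b}_1)/h(\underline{b}_1)=\underline{\alpha}_1\lambda(g-s)/r-\underline{b}_1$ and substituting into $\underline{\alpha}_1 g+H^2(\underline{b}_1)/h(\underline{b}_1)=s$ yields
\[
\underline{\alpha}_1=\frac{s+H(\underline{b}_1)\,\underline{b}_1}{g+H(\underline{b}_1)\,\lambda(g-s)/r},\qquad \underline{\alpha}_2=\frac{s}{g+\lambda(g-s)/r}.
\]
Cross-multiplying, the sign of $\underline{\alpha}_1-\underline{\alpha}_2$ equals the sign of $s\cdot\lambda(g-s)/r\cdot[1-H(\underline{b}_1)]+H(\underline{b}_1)\,\underline{b}_1\,[g+\lambda(g-s)/r]$, which is strictly positive because $\underline{b}_1>0$ (the FCO bonus is interior at the stopping belief by Lemma \ref{fconly_lemma}) and $H(\underline{b}_1)>0$ (Assumption \ref{Inada} gives $h(0)>0$). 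Hence $V(\underline{\alpha}_1)=s/r<U(\underline{\alpha}_1)$, and continuity of $V$ and $U$ combined with the two endpoint inequalities gives a crossing $\alpha^{**}\in(\underline{\alpha}_1,1)$ by the intermediate value theorem.

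The main obstacle is the strict inequality $\underline{\alpha}_1>\underline{\alpha}_2$, since the two thresholds are determined by rather different systems (an optimality condition on the bonus versus a closed-form expression). The key observation is that rewriting $\underline{\alpha}_1$ in the closed form above exhibits $\underline{\alpha}_2$ precisely as the degenerate $\underline{b}_1\to 0$ limit, so the strict comparison is driven entirely by the interiority of the FCO bonus at the stopping belief, which Assumption \ref{Inada} and the first-order condition in Lemma \ref{fconly_lemma} jointly guarantee.
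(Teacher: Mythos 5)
Your proof is correct, and its core coincides with the paper's: the paper also proves Fact \ref{Fact_intersect} by combining the two stopping systems of Lemmas \ref{fconly_lemma} and \ref{pconly_lemma} to get $rs=\underline\alpha_1 gr+\underline\alpha_1\lambda(g-s)H(\underline b_1)-r\underline b_1H(\underline b_1)<\underline\alpha_1 gr+\underline\alpha_1\lambda(g-s)$ versus $rs=\underline\alpha_2 gr+\underline\alpha_2\lambda(g-s)$, which is exactly your cross-multiplication establishing $\underline\alpha_1>\underline\alpha_2$. Where you diverge is in how the crossing itself is concluded: the paper simply appeals to convexity and single-crossing of $V$ and $U$ after the threshold comparison, while you close the gap explicitly with the intermediate value theorem, checking $V(\underline\alpha_1)=s/r<U(\underline\alpha_1)$ on one side and $U(1)<g/r=V(1)$ on the other. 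Your $\alpha=1$ endpoint computation (the stage flow $s+H(b)(g-s-b)<g$ for all $b$, hence $U(1)<g/r$ after the $H(b)\lambda^2 z/r-H(b)\lambda U$ terms cancel against $g/r$) is a genuine addition that makes the existence of the intersection self-contained rather than resting on an unargued convexity claim; the only slight looseness is that you describe only the derivative term as vanishing at $\alpha=1$ when the belief-updating terms must also be carried through the comparison, but they net out in your favor, so the conclusion stands.
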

\begin{proof}
    Obviously  from lemma \ref{fconly_lemma} we know
    $rs=r\underline\alpha_1 g+rH(\underline b_1)[\beta(\underline b_1)-\underline b_1]$, so
\[\begin{aligned}
    rs&=\underline\alpha_1gr+\underline\alpha_1\lambda(g-s)H(\underline b_1)-r\underline b_1H(\underline b_1)\\
    &<\underline\alpha_1gr+\underline\alpha_1\lambda(g-s)
\end{aligned}
\]
while from lemma \ref{pconly_lemma} we know 
\[rs=\underline\alpha_2 gr+\underline\alpha_2\lambda(g-s)\]
assumption \ref{assumption_meaningful} guarantees $g>s$, hence we always have
	\[\underline \alpha_1>\underline \alpha_2\]
 
Combined with convexity, we know $V(\alpha)$ and $U(\alpha)$ is single-crossing in non trivial part.
\end{proof}

 In other words, there should be an optimal switching between FC and PC. And PC should optimally switching to Safe Arm. Hence we denote
\[\alpha^{PC}_{SA}=\underline \alpha_2\]
Which shows that the ODE (together with boundary condition) described in lemma \ref{pconly_lemma} pin down the value function of optimal switching monopolist using PC. 

We denote the switching belief as $\alpha^{PC}_{FC}$. Then the optimal switching condition (boundary condition for the ODE described in lemma \ref{fconly_lemma} ) should be changed as
\begin{equation}
V(\alpha_{PC}^{FC})=U(\alpha_{PC}^{FC}),V'(\alpha_{PC}^{FC})=U'(\alpha_{PC}^{FC})
	\label{smooth past}
\end{equation}

 An intuitive explanation may be that Partial Coverage here can be understood as a mixed strategy of using Full Coverage and selling safe arm. A numerical illustration can be found in comparison between figure \ref{wrong_fc_pc} and figure \ref{fig:correct_fcl_pcl}, where it looks as if we try to connect $V(\alpha)$ and $U(\alpha)$ smoothly when the first optimal switching occurs. More over we can prove:
\[\alpha_{PC}^{FC}=\frac{s}{z\lambda}\]
where the right hand side is nothing but the cut-off of the myopic monopolist (i.e. the monopolist who can only consider the present profit instead of the value function including discounted future profits). Base on Lemma \ref{fconly_lemma}, \ref{pconly_lemma} and equation \ref{smooth past}, we use the following proposition to state the optimal strategy of the monopolist completely.

Moreover, we can also prove $\alpha_{FC}^{NB}>\alpha_{PC}^{FC}$, since we have $b>0$ at $\alpha_{PC}^{FC}$, and $b=0$ at $\alpha_{FC}^{NB}$.
\begin{proposition}
There exists cutoffs $0<\alpha_{SA}^{PC}<\alpha_{PC}^{FC}<\alpha_{FC}^{NB}<1$, where $\alpha_{SA}^{PC}=\frac{sr}{\lambda(g-s)+rg},\alpha_{PC}^{FC}=\frac{s}{\lambda z}, \text{ and } \alpha_{PC}^{FC} \text{ solves } V(x)=xz\lambda $, such that:
\begin{itemize}
     \item if $0<\alpha< \alpha_{SA}^{PC}$, the social planner would use Safe Arm.
     \item if $\alpha_{SA}^{PC}<\alpha<\alpha_{PC}^{FC}$, the social planner would use Partial Coverage.
     \item if $\alpha_{PC}^{FC}<\alpha<\alpha_{FC}^{NB}$, the social planner would use Full Coverage.
     \item if $\alpha_{FC}^{NB}<\alpha<1$, the social planner would use Non Bonus.
 \end{itemize}
\label{optimal_switching_proposition}
\end{proposition}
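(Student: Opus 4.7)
My plan is to characterize the optimal value function as the upper envelope of the regime-specific value functions, and to pin down the three cutoffs one at a time using the Bellman and FOC identities from Lemmas \ref{fconly_lemma} and \ref{pconly_lemma} together with the single-crossing of $V$ and $U$ from Fact \ref{Fact_intersect}. The bottom cutoff is immediate: since Fact \ref{Fact_intersect} shows $U>V$ for all beliefs below their unique crossing point, the monopolist's value function coincides with the PC-only $U$ on a neighborhood above the PC-SA stopping boundary, and Lemma \ref{pconly_lemma} hands us $\alpha_{SA}^{PC}=\underline{\alpha}_2=rs/[\lambda(g-s)+rg]$, with $\underline{b}_2=0$ making the pasting to SA automatic.

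For the middle cutoff, I would impose value matching and smooth pasting $V(\alpha^*)=U(\alpha^*)$, $V'(\alpha^*)=U'(\alpha^*)$ at $\alpha^*=\alpha_{PC}^{FC}$. The two Bellman identities
\[rV(\alpha)=\lambda\alpha z+\frac{H^2(b_{FC})}{h(b_{FC})},\qquad rU(\alpha)=s+\frac{H^2(b_{PC})}{h(b_{PC})}\]
then give $\lambda\alpha^* z-s=H^2(b_{PC})/h(b_{PC})-H^2(b_{FC})/h(b_{FC})$, while subtracting the two FOCs (using $V=U$ and $V'=U'$ to cancel the $V$- and $V'$-terms) yields $\beta(b_{PC})-\beta(b_{FC})=\lambda\alpha^* z-s$. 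Equating the two right-hand sides and introducing $\phi(b):=\beta(b)-H^2(b)/h(b)=b+H(b)(1-H(b))/h(b)$ rewrites the joint condition as $\phi(b_{FC})=\phi(b_{PC})$. A short computation yields the clean identity $\phi'(b)=(1-H(b))\beta'(b)$, strictly positive on $[0,\bar c)$ by Assumption \ref{assumption:monotone_beta}, so $\phi$ is strictly increasing; hence $b_{FC}=b_{PC}$ at $\alpha^*$, and the FOC difference then forces $\lambda\alpha^* z=s$, i.e.\ $\alpha_{PC}^{FC}=s/(\lambda z)$, which is exactly the myopic cutoff.

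For the top cutoff, the analysis around equation \eqref{Full Coverage Non Bonus} already shows that the FC FOC combined with its Bellman implies $V(\alpha)=\alpha\lambda z/r$ exactly where the FC bonus first vanishes; this defines $\alpha_{FC}^{NB}$ implicitly and delivers smooth pasting with the Non-Bonus value $\alpha\lambda z/r$. The ordering $0<\alpha_{SA}^{PC}<\alpha_{PC}^{FC}<\alpha_{FC}^{NB}<1$ then splits cleanly: $\alpha_{SA}^{PC}<\alpha_{PC}^{FC}$ reduces to $\lambda(g-s)>0$ (Assumption \ref{assumption_meaningful}); $\alpha_{PC}^{FC}<\alpha_{FC}^{NB}$ follows from $b>0$ at $\alpha_{PC}^{FC}$ together with Lemma \ref{fconly_lemma} part 4 (monotone decreasing bonus); and $\alpha_{FC}^{NB}<1$ follows from Assumption \ref{assumption_barc_large} ruling out Immediate Revelation. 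The main obstacle I anticipate is the middle step: reducing the smooth pasting pair to the explicit formula $s/(\lambda z)$ rests on the non-obvious identity $\phi'(b)=(1-H(b))\beta'(b)$, and on combining the FOCs and Bellmans from two distinct regimes cleanly at the common belief without circularity.
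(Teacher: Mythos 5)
Your proposal follows the same route as the paper: the PC--SA cutoff is inherited directly from Lemma \ref{pconly_lemma} (and the fact that the optimal switcher's PC branch shares that boundary condition), the FC--PC cutoff comes from value matching plus smooth pasting, Fact \ref{Fact_intersect} orders the regimes, and $\alpha_{PC}^{FC}<\alpha_{FC}^{NB}$ is exactly the paper's own ``$b>0$ at $\alpha_{PC}^{FC}$, $b=0$ at $\alpha_{FC}^{NB}$'' argument. Your reduction of the middle cutoff --- combining the two Bellman identities with the two first-order conditions to get $\phi(b_{FC})=\phi(b_{PC})$ for $\phi(b)=b+H(b)(1-H(b))/h(b)$, and then using $\phi'(b)=(1-H(b))\beta'(b)>0$ to force $b_{FC}=b_{PC}$ and hence $\lambda\alpha^* z=s$ --- is correct (the identity checks out) and supplies the computation the paper only asserts. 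The one loose step is attributing $\alpha_{FC}^{NB}<1$ to Assumption \ref{assumption_barc_large}: that assumption only keeps the bonus away from the upper corner $\bar c$ (ruling out Immediate Revelation); it does not show the bonus hits the lower corner $0$ strictly before $\alpha=1$. That requires the right-hand side of the FC first-order condition to become nonpositive at an interior belief, which is the separate content of $\hat\alpha_{FC}^{NB}<1$ in Lemma \ref{fconly_lemma}; cite that instead to close the step.
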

To illustrate our results, a numeric example is provided whose details can be found in Appendix \ref{appendix_numeric}.  Figure \ref{fig:correct_fcl_pcl} is a numeric result based on the same parameters in Figure \ref{wrong_fc_pc}.  The horizontal axis stands for $\alpha$ and the vertical axis stands for the value function. 


\begin{figure}[H]
\centering  
\includegraphics[width=0.5\textwidth]{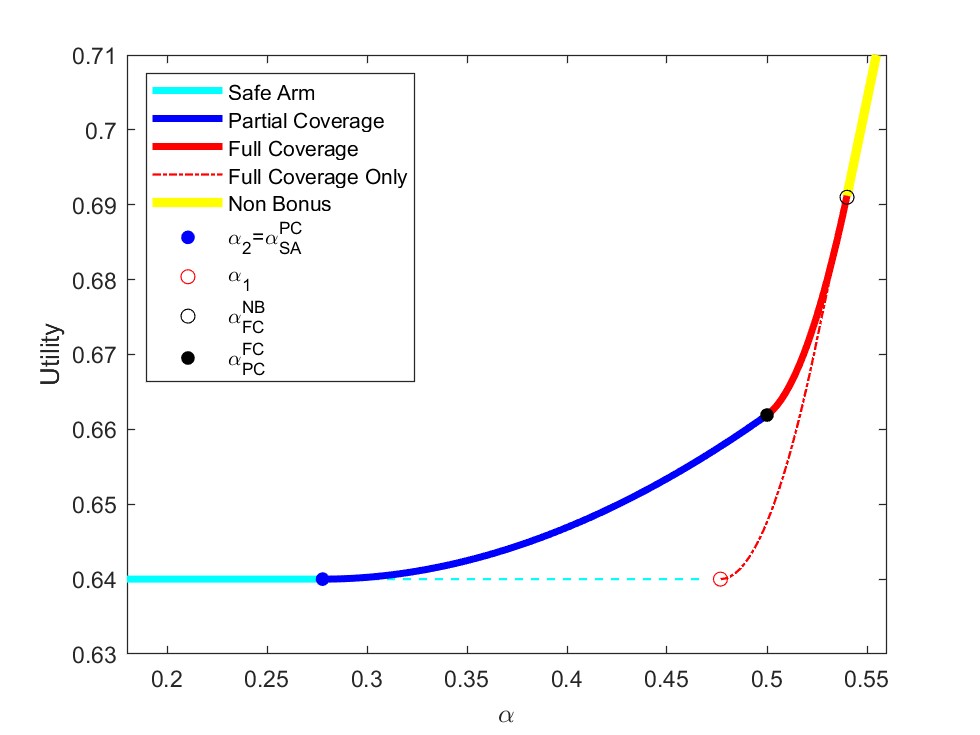}
\caption{Optimal Switching Monopolist}
\label{fig:correct_fcl_pcl}
\end{figure}

\subsection{Welfare Analysis}
\label{section_welfare}
\subsubsection{Social planner's problem : a mechanism design way}
We consider such a setting, where a social planner, commits an allocation rule $p(\alpha,\tilde c)$, a transfer rule $\tilde t(\alpha,\tilde c)$ (or equivalently, denote $t(\alpha,\tilde c)=\lim\limits_{\Delta\to 0 }\frac{\tilde t(\alpha,\tilde c)}{\Delta}$) and a reporting rule $q(\alpha,\tilde c)$, where $\tilde c$ is current agent $i$'s reported value of $c_i$, might be equal to $c_i$ or not.  If the current belief is $\alpha$, and current agent with true cost $c_i$ arrives and reports $\tilde c_i$, the social planner will allocate him a risky arm with probability $p(\alpha,\tilde c)$ with a transfer $t(\alpha,\tilde c)$, and then, conditional on  getting the risky arm, the agent has probability $q(\alpha,\tilde c)$ to be obliged to report the true utility and $1-q(\alpha,\tilde c)$ to do nothing. Reporting, allocation and transfer will be realised at the beginning of period $[t,t+\D)$, and the report (if any) will happen at the end.
Hence for the agent with true cost $c_i$, reporting $\tilde c$ leads to utility\footnote{In duration $\Delta$, the agents' utility can be written as $\mathbb E(Surplus)=-\tilde t(\alpha,\tilde c)+p(\alpha,\tilde c) \int_0^{\Delta} \lambda e^{-\lambda t}e^{-rt}\alpha z dt-p(\alpha,\tilde c) q(\alpha,\tilde c) \int_0^{\Delta}e^{-rt} c dt+[1-p(\alpha,\tilde c) ] \int_0^{\Delta} s e^{-rt} dt$ . We denote $U(c,\tilde c,\alpha)=\lim\limits_{\Delta\to 0 }\frac{\mathbb E(Surplus)}{\Delta}$ . }
\[U(c,\tilde c,\alpha)=-t(\alpha,\tilde c)+p(\alpha,\tilde c)[\alpha z \lambda - q(\alpha,\tilde c) c]+(1-p(\alpha,\tilde c))s\]
and IC constraint hence requires 
\[c\in \arg\max_{\tilde c} U(c,\tilde c,\alpha)\]

For the social planner, in period $\Delta$, if the current agent reports $\tilde c$ (we use $c$ to replace $\tilde c$ since in equilibrium we have truthfully reporting), his immediate gain from total social surplus, should be 
\begin{equation}
\tilde \pi(\triangle,\alpha,c)=p(\alpha, c) \int_0^{\Delta} \lambda e^{-\lambda t} e^{-rt} \alpha z dt -p(\alpha, c)q(\alpha, c) \int_0^{\Delta} e^{-rt}c dt +\int_0^{\Delta}e^{-rt}[1-p(\alpha, c)]sdt\label{equation_social}
\end{equation}

We define the contingent value function conditional on current $\alpha, c$ as $\hat W(\alpha,c)$, and then define the value function as $W(\alpha)$, which is based on $\alpha$ only and stands for the expected total utility when holding belief $\alpha$. Then we have (the formation details can be found in Appendix \ref{Section_formation_social_planner})
\[W(\alpha)=\mathbb E_c[\hat W(\alpha,c)]=\int_c \hat W(\alpha,c)dH(c)\]

Then we apply the operator $\int_c \cdot dH(c)$ for equation (\ref{equation_social}), (i.e. take the expectation with respect to $c$) for both sides, we can execute the limit calculation by using L'Hospital's Rule:
\[[r+\lambda \alpha \mathbb E(pq)]W(\alpha)=r\lambda \alpha z \mathbb E(p)+[1-\mathbb E(p)] sr-r\mathbb E(pqc)+\lambda^2 z\alpha \mathbb E(pq)+(\alpha^2-\alpha)\lambda W'(\alpha)\mathbb E(pq)\]

And the social planner should set the optimal $(t,q,p)$ to maximize $W(\alpha)$. Proposition \ref{proposition_socialplanner} describes the optimal strategy for the social planner. Denote $B(\alpha)\equiv \frac{\alpha\lambda^2 z}{r}+(\alpha^2-\alpha)\lambda W'(\alpha)-\lambda \alpha W(\alpha)$, $C_1(\alpha)\equiv\frac{B(\alpha)+\lambda z \alpha -s}{\lambda}$ and $C_2(\alpha)\equiv\frac{B(\alpha)}{\lambda}$, then we have: 
\begin{proposition}
    There exists optimal mechanism $(p,q,t)$ to guarantee that every agent will truthfully report their cost. There exists cutoff $\bar \alpha_{SA}^{PC}$, $\bar \alpha_{PC}^{FC}$ and $\bar \alpha_{FC}^{NB}$ and cutoff $C_1(\alpha)$ and $C_2(\alpha)$ for agents' cost,  such that:
 \begin{itemize}
     \item if $0<\alpha<\bar \alpha_{SA}^{PC}$, the social planner would set $p=q=0$ (i.e. using Safe Arm).
     \item if $\bar \alpha_{SA}^{PC}<\alpha<\bar \alpha_{PC}^{FC}$, the social planner would set $q=1$ for sure and $p=1$ if $c<C_1(\alpha)$ (i.e. using Partial Coverage).
     \item if $\bar \alpha_{PC}^{FC}<\alpha<\bar \alpha_{FC}^{NB}$, the social planner would set $p=1$ for sure and $q=1$ if $c<C_2(\alpha)$ (i.e. using Full Coverage).
     \item if $\bar \alpha_{FC}^{NB}<\alpha<1$, the social planner would set $p=1,q=0$ for sure (i.e. using Non Bonus).
 \end{itemize}
 \label{proposition_socialplanner}
\end{proposition}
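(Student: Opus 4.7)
My plan is to treat the social planner's problem as a dynamic screening problem and use the fact that the planner's stage payoff $\tilde\pi$ in (\ref{equation_social}) is already total surplus, so transfers drop out of the objective. First I would apply the envelope theorem to the agent's truth-telling problem: since $c$ enters $U(c,\tilde c,\alpha)$ only through the linear term $-p(\alpha,\tilde c)q(\alpha,\tilde c)\,c$, the indirect utility $V(c)=U(c,c,\alpha)$ satisfies $V'(c)=-p(\alpha,c)q(\alpha,c)$. A standard single-crossing argument then yields the IC characterization: $p(\alpha,c)q(\alpha,c)$ must be non-increasing in $c$, and the transfer $t(\alpha,c)$ is pinned down by integrating $V'$ with the constant fixed by the IR bound $V(\bar c)\geq 0$. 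Because transfers are internal to social welfare, the planner's problem reduces to choosing $(p,q)\in[0,1]^2$ at each $(\alpha,c)$ subject only to this monotonicity constraint.

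Second, I would rearrange the Bellman ODE for $W$ to expose its pointwise structure in $c$. After grouping, the per-realization contribution to the right-hand side takes the form $rs+r\,p(\alpha,c)(\lambda\alpha z-s)+\lambda\, p(\alpha,c)q(\alpha,c)(C_2(\alpha)-c)$, so the optimization in $(p,q)$ decouples into two sign tests: $p$ alone responds to whether the myopic flow surplus favors the risky or safe arm ($\lambda\alpha z\gtrless s$), while, after accounting for the continuation value from a report, the combined choice $pq=1$ is optimal exactly when $\lambda\alpha z-s+\lambda(C_2(\alpha)-c)=\lambda(C_1(\alpha)-c)\geq 0$. This directly yields the two cost cutoffs $C_1,C_2$ stated in the proposition and the four regimes: when $C_1(\alpha)\leq 0$ the planner sets $p=q=0$ (Safe Arm); when $0<C_1(\alpha)$ and $\lambda\alpha z<s$, $p=q=\mathbf{1}\{c<C_1(\alpha)\}$ (Partial Coverage); when $\lambda\alpha z\geq s$ and $C_2(\alpha)>0$, $p\equiv 1$ and $q=\mathbf{1}\{c<C_2(\alpha)\}$ (Full Coverage); and when $C_2(\alpha)\leq 0$, $p\equiv 1$, $q\equiv 0$ (Non-Bonus). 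In each regime $pq$ is a non-increasing step function of $c$, so the IC monotonicity is slack and the pointwise solution is implementable; the transfer $t(\alpha,c)$ is then recovered by the standard Myerson integration formula.

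Third, I would locate the three belief cutoffs $\bar\alpha_{SA}^{PC},\bar\alpha_{PC}^{FC},\bar\alpha_{FC}^{NB}$ as the free boundaries of a piecewise ODE system. The internal boundary $\bar\alpha_{PC}^{FC}$ is pinned by the sign flip $\lambda\alpha z=s$, giving $\bar\alpha_{PC}^{FC}=s/(\lambda z)$ exactly as in Proposition \ref{optimal_switching_proposition}; the boundaries $\bar\alpha_{SA}^{PC}$ and $\bar\alpha_{FC}^{NB}$ are the beliefs at which $C_1$ and $C_2$ respectively cross zero. The ordering $\bar\alpha_{SA}^{PC}<\bar\alpha_{PC}^{FC}<\bar\alpha_{FC}^{NB}$ then follows from the identity $C_1(\alpha)-C_2(\alpha)=(\lambda\alpha z-s)/\lambda$ together with Assumption \ref{assumption_meaningful}.

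The main obstacle I anticipate, in analogy with Fact \ref{Fact_intersect} for the monopolist, is verifying that the piecewise ODE solution admits a convex, continuously differentiable $W$ that concatenates smoothly at all three cutoffs, so that the pointwise optimizer is in fact globally optimal via the principle of optimality (\cite{SLP}). I would mirror the monopolist's argument, solving the PC and FC branches of the ODE with endpoint conditions $W(\bar\alpha_{SA}^{PC})=s/r$ and $W(\bar\alpha_{FC}^{NB})=\bar\alpha_{FC}^{NB}\lambda z/r$ and imposing value-matching plus smooth-pasting at the interior cutoff $\bar\alpha_{PC}^{FC}=s/(\lambda z)$ to pin down the free boundaries; Assumptions \ref{Inada}--\ref{assumption_barc_large} would then rule out degenerate Immediate-Revelation-type outcomes and guarantee the ordering of the regimes.
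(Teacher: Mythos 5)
Your proposal follows essentially the same route as the paper: transfers drop out of the total-surplus objective, IC is handled by the standard envelope/monotonicity argument so the planner's problem reduces to a pointwise linear program in $(p,pq)$, and the decomposition of the Bellman right-hand side into a myopic term $p(\lambda\alpha z-s)$ plus an information term $pq(\,\cdot\,-c)$ is exactly what generates the paper's cutoffs $C_1(\alpha)$ and $C_2(\alpha)$ (with $C_1-C_2=(\lambda\alpha z-s)/\lambda$) and the four regimes, with $\bar\alpha_{PC}^{FC}=s/(\lambda z)$ pinned by the sign flip of the myopic term. One step is thinner than you suggest: the ordering $\bar\alpha_{SA}^{PC}<\bar\alpha_{PC}^{FC}<\bar\alpha_{FC}^{NB}$ does not follow from the identity $C_1-C_2=(\lambda\alpha z-s)/\lambda$ and Assumption \ref{assumption_meaningful} alone --- you also need $B\bigl(s/(\lambda z)\bigr)>0$ (the value of a report is still strictly positive at the myopic-indifference belief, the planner's analogue of $\underline\alpha_2<s/(\lambda z)$ in Fact \ref{Fact_intersect}) together with the monotonicity/single-crossing of $C_1$ and $C_2$ delivered by the convexity of $W$; you correctly defer this to the smooth-pasting construction, so it is an acknowledged incompleteness rather than a wrong turn. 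Finally, double-check the constants in your pointwise objective ($\lambda$ versus $r$ multiplying the cost term), since as written it does not exactly reproduce the paper's normalization $C_2=B/\lambda$.
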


Above proposition \ref{proposition_socialplanner} shows that social planner's decision is also based on the same optimal switching.  Like before, the social planner is more likely to use Full Coverage when the common belief is high (and Non Bonus is the common belief is even higher), use Safe Arm when belief is low, and use Partial Coverage as a transition when belief is moderate. More specifically, when $\alpha>\bar \alpha_{PC}^{FC}$, $p=1$ for sure, which means the social planner will always ask the agent to try the risky arm and report if the current agent is with a low enough $c$ such that $q=1$, which is like using Full Coverage.  Otherwise when $\bar \alpha_{SA}^{PC}<\alpha<\bar \alpha_{PC}^{FC}$, $p$ is possible to be zero when $c$ is large. The social planner will only ask agents with low enough costs to test the risky arm, hence it's equivalent to Partial Coverage. What’s still important here is that the once the social planner ask an agent to test the risky arm, the agent will be obliged to report(for more details please check Appendix \ref{Proof_proposition_socialplanner}). In a more intuitive way, these agents whose reporting cost is small enough to let $p=1$ for them are asked to use risky arm to explore rather than exploit the risky arm since myopically safe arm is better. When $\alpha<\bar \alpha_{SA}^{PC}$, the social planner will set $p=0$ for sure, which means the information aggregation is terminated, and the social planner sets Safe Arm. As before we can still characterize $W(\alpha)$ via a ODE system, which can be found in section \ref{proof_social_planner_optimization}.

A key message is that the switching cutoff belief is irrelevant with the decision maker. To be more specific:
\begin{enumerate}
	\item The social planner will switch to Partial Coverage at the same belief cutoff as the monopolist. At the cutoff belief, the decision maker should feel indifferent myopically between safe arm and risky arm, since the reporting probability is continuous at this cutoff. 
	\item The social planner will stop at the same belief cutoff as the monopolist. Though surprising at the first sight, actually the switching condition guarantees this equivalence. $\underline \alpha_2$ actually stands for the switching belief between Partial Coverage and Non Bonus, at which in the long run the decision maker feels indifferent between doing more experiment or not. One more experiment is of no value when the common belief is low enough, and it's irrelevant with the cost of experiment. Intuitively speaking, because of the existence of Partial Coverage, if the information is valuable for the social planner, it's also valuable for the monopolist since the monopolist can always set a low enough bonus and use Partial Coverage, and keep waiting until agents with low enough cost arrive, try the risky arm and report. 
\end{enumerate}

\subsubsection{Comparison of social planner's strategy and monopolist's strategy}
\label{section_comparison}
We measure the efficiency of the learning process led by the monopolist in two ways. Firstly, will the monopolist stop experiments at a higher belief? And secondly, will the monopolist attach the termination sooner than the social planner? The following proposition answers the first question:

\begin{proposition}
	  At any belief, the social planner will do experiments with (weakly) higher probability. As a result  $\bar \alpha_{FC}^{NB}> \alpha_{FC}^{NB}$. Moreover, the social planner will switch from Full Coverage to Partial Coverage, and switch from Partial Coverage and Non Bonus at the same believes as the monopolist (i.e. $\bar \alpha_{SA}^{PC}= \alpha_{SA}^{PC}$, and $\bar \alpha_{PC}^{FC} = \alpha_{PC}^{FC}$).
   \label{proposition_speed}
\end{proposition}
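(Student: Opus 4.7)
The plan is to establish the three claims separately, using the optimal bonus and threshold characterizations from Lemmas \ref{fconly_lemma}, \ref{pconly_lemma} and the planner's mechanism in Proposition \ref{proposition_socialplanner}.

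For the lower cutoff $\bar\alpha_{SA}^{PC}=\alpha_{SA}^{PC}$: on the monopolist side, Lemma \ref{pconly_lemma} gives $\underline b_2=0$ and $\underline\alpha_2=rs/[\lambda(g-s)+rg]$. On the planner side, the boundary is defined by $C_1(\alpha)=0$ (no agent is induced to experiment), together with value matching $W(\alpha)=s/r$ and smooth pasting $W'(\alpha)=0$ against the Safe-Arm value. Substituting these into $C_1(\alpha)=[B(\alpha)+\lambda z\alpha-s]/\lambda$ and simplifying reduces the equation to $\alpha[\lambda(g-s)+rg]=rs$, which coincides with the monopolist's condition. Economically, at the cutoff only arbitrarily-cheap experimenters are marginal, so the monopolist's information-rent wedge vanishes and both decision makers face the same zero-value-of-experimentation condition.

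For the middle cutoff $\bar\alpha_{PC}^{FC}=\alpha_{PC}^{FC}=s/(\lambda z)$: direct inspection of \eqref{equation_pi_1l} gives the FC$-$PC flow difference, at any common bonus $b$, equal to $(1-H(b))(\lambda\alpha z-s)$, which vanishes and changes sign exactly at $\alpha^{*}=s/(\lambda z)$. The FOCs in Lemmas \ref{fconly_lemma} and \ref{pconly_lemma} differ only by the summand $\lambda\alpha z-s$, and in Proposition \ref{proposition_socialplanner} the planner's cutoffs $C_1$ and $C_2$ differ by the identical shift; thus both the monopolist's bonus and the planner's cost cutoff are continuous at $\alpha^{*}$. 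Value matching and smooth pasting there follow automatically, because the FC and PC ODEs evaluated at $\alpha^{*}$ coincide once the bonuses/thresholds coincide. The strict sign of $\lambda\alpha z-s$ on either side then forces $\alpha^{*}$ to be the unique optimal switch for both decision makers.

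For the experimentation-probability comparison and the consequence $\bar\alpha_{FC}^{NB}>\alpha_{FC}^{NB}$: fix a belief in the common FC region. The monopolist's marginal cost of inducing an additional reporter is $\beta(b^M)=b^M+H(b^M)/h(b^M)$, reflecting the rent paid to all inframarginal reporters, while the planner's marginal cost is just the cost $C(\alpha)$ of the marginal agent. Since $\beta(x)>x$ for $x>0$, even at identical marginal benefit of information the planner's optimal threshold dominates the monopolist's bonus. To turn this into a pointwise statement I will first show $W(\alpha)\ge V(\alpha)$, which holds because the monopolist's pricing-and-bonus scheme is feasible for the planner supplemented by lump-sum transfers, with shared boundary $W(1)=V(1)=g/r$. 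Convexity of both value functions together with this common right-endpoint then delivers $B_S(\alpha)\ge B_M(\alpha)$ for the information-benefit term via a standard comparison argument on the respective Bellman ODEs. Combining the $\beta$-wedge with the benefit ordering gives $H(C(\alpha))\ge H(b^M(\alpha))$ everywhere; in particular $b^M$ hits zero (entering Non Bonus) at a strictly lower belief than $C$, which is $\bar\alpha_{FC}^{NB}>\alpha_{FC}^{NB}$.

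The main obstacle is the comparison-principle step transferring $W\ge V$ into an ordering of slopes, since $B$ depends on both the level and the derivative. I expect to handle it by noting that at $\alpha=1$ the two value functions and their boundary slopes agree and the ODEs become degenerate, and then ruling out an interior crossing of $W'-V'$ via a contradiction with the maximality of the planner's operator; a cleaner alternative, which I would pursue first, is to rewrite the planner's Bellman equation so that the monopolist's scheme appears explicitly as one feasible choice, with the wedge $\int_0^{b}H(c)\,dc$ entering as an added surplus term, so that dominance descends immediately into the FOCs.
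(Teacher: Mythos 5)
Your treatment of the two cutoff equalities is correct and is essentially the paper's own argument. For $\bar\alpha_{SA}^{PC}=\alpha_{SA}^{PC}$ the decisive fact is indeed $\underline b_2=0$ in Lemma \ref{pconly_lemma}: at the stopping belief only cost-zero agents are marginal, the information-rent wedge vanishes, and both stopping conditions reduce to $\alpha[\lambda(g-s)+rg]=rs$. For $\bar\alpha_{PC}^{FC}=\alpha_{PC}^{FC}=s/(\lambda z)$, your observation that the FC and PC flow payoffs differ by $(1-H(b))(\lambda\alpha z-s)$ at any fixed reporting threshold while the continuation terms coincide is exactly the myopic-indifference logic the paper uses, and it is if anything cleaner than a smooth-pasting verification.

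The gap is in the third part, precisely where you flagged it, and I do not think your proposed fixes close it as stated. Write $B_\Pi(\alpha)=\lambda\alpha\left[\frac{\lambda z}{r}-\Pi(\alpha)-(1-\alpha)\Pi'(\alpha)\right]$: the benefit term is the gap between $\Pi(1)=\lambda z/r$ and the tangent line of $\Pi$ at $\alpha$ evaluated at $1$. Because $\Pi'$ enters with coefficient $\alpha^2-\alpha<0$, the facts $W\ge V$, convexity, and $W(1)=V(1)$ do \emph{not} order $B_W$ and $B_V$: both tangent values lie below the common endpoint, and near $\alpha=1$ the comparison reduces to comparing second derivatives, about which you have no information. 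So "a standard comparison argument" is not available off the shelf. Moreover, the NB cutoff is not pinned down by a local first-order condition at all: at any belief where a candidate value function value-matches and smooth-pastes to the line $\alpha\lambda z/r$, one computes $B=0$ identically, so both $\alpha_{FC}^{NB}$ and $\bar\alpha_{FC}^{NB}$ are determined globally by integrating the respective ODEs up from $\alpha_{PC}^{FC}$. The productive version of your "cleaner alternative" is the Myersonian identity $H(b)b=\int_0^b\beta(c)\,dH(c)$ versus the planner's $\int_0^b c\,dH(c)$, which exhibits the monopolist's problem as the planner's problem with every cost $c$ inflated to the virtual cost $\beta(c)\ge c$; monotonicity of the Bellman operator then gives $W\ge V$ and hence $\{\alpha:W(\alpha)=\alpha\lambda z/r\}\subseteq\{\alpha:V(\alpha)=\alpha\lambda z/r\}$, i.e.\ $\bar\alpha_{FC}^{NB}\ge\alpha_{FC}^{NB}$ without any slope comparison. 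But the pointwise policy ranking $C(\alpha)\ge b(\alpha)$ "at any belief" (including the PC region, where the comparison is against $C_1$ and which your sketch omits) and the \emph{strict} inequality $\bar\alpha_{FC}^{NB}>\alpha_{FC}^{NB}$ still require ruling out tangency of $W$ with the Non-Bonus line at $\alpha_{FC}^{NB}$, and that step is missing from the proposal.
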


Proposition \ref{proposition_speed} that the monopolist's willingness do experiments is less than the social planner, hence every stage there is lower probability to report. Since the social planner bears less cost than the monopolist ($\mathbb E(c|c<b)<b$), a quick thought should be that the social planner will encourage agents with higher reporting cost to report than the monopolist. In other words, the learning speed of the social planner should be higher. ``Learning speed'' actually means reporting probability in our model (instead of arrival rate $\lambda$). We should realize that here social planner can naturally do better since he can design the reporting probability on $\alpha,c$ while the monopolist can only set it on $\alpha$. 
\begin{figure}[H]
\centering  
\includegraphics[width=0.5\textwidth]{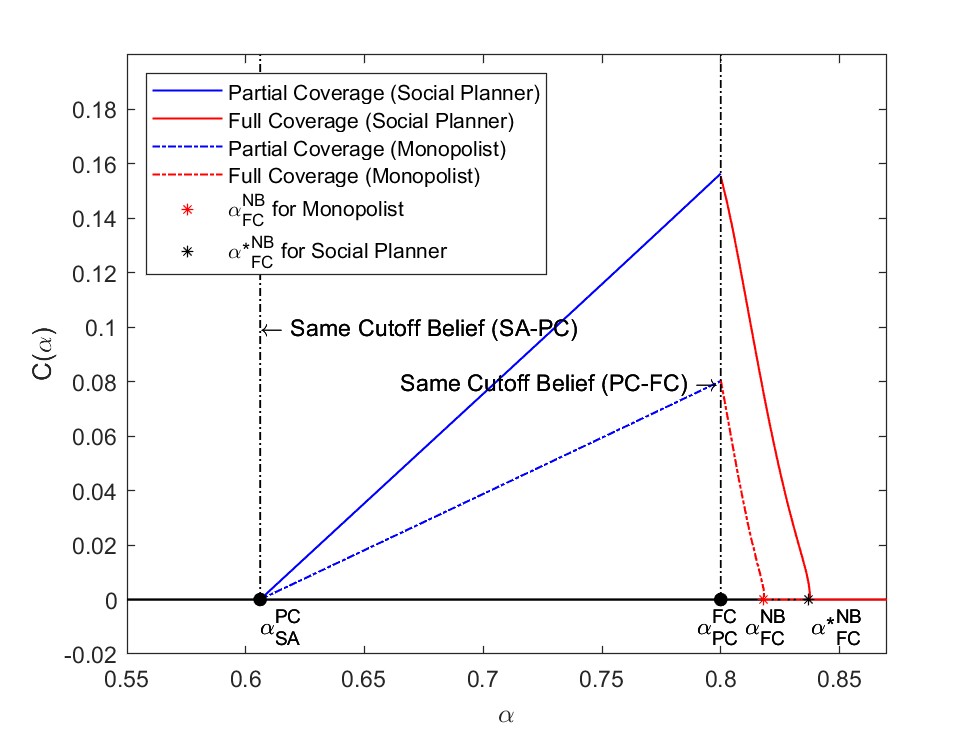}
\caption{Mechanisms Comparison }
\label{fig:pqplot}
\end{figure}

A further illustration is in Figure \ref{fig:pqplot}. Both using Full Coverage, when the current agent's reporting cost $c$ is lower than the red solid line, the social planner is willing to set $p=q=1$, while the monopolist is only willing to set $p=q=1$ when  $c$ is lower than the red dotted line. We can find the monopolist will have lower probability to do experiments than the social planner.
Similarly, blue lines stands for the Partial Coverage, and our observations still hold.

\subsubsection{Social Surplus under Monopolist's Setting}
Section \ref{section_comparison} demonstrates that the learning speed under the monopolist setting is consistently lower than that under the social planner. A subsequent question arises: does the slower learning speed lead to inefficiency? To address this inquiry, we require the value function of social surplus under the monopolist setting. Intuitively, the monopolist sets a lower bonus primarily to avoid incurring reporting costs. Hence, a natural conjecture is that a lower learning speed aligns with inefficiency. In this section, we will substantiate this intuition (refer to Figure \ref{fig:newsocial}).

To calculate the social surplus under the monopolist setting, we first express the value function and payoff contingent on the current agent's cost $c$ and then take expectations (More details are in Appendix \ref{Social_Surplus_Monopolist}). Conditional on $c$, we denote $P_L(b)=\mathbb I(b>c)$ as whether the current agent would report and modify equation \eqref{equation_pi_1l} as
\begin{equation}
    \mathbb E_c\lim_{\Delta\to0}\frac{\pi_{L}(\Delta,c)}{\Delta}=\left\{
\begin{aligned}
	&\alpha z \lambda - H(b)\mathbb E(c|c<b)& \text{ if } L=\text{FC}\\
	&H(b) \lambda z \alpha-H(b)\mathbb E(c|c<b) +[1-H(b)]s& \text{ if } L=\text{PC}\\
	& s&\text{ if } L=\text{SA}\\
    & \alpha z  \lambda &\text{ if } L=\text{NB}
\end{aligned} \right.
\label{stage_value_function_social_surplus}
\end{equation}

\begin{figure}[H]
	\centering
	\includegraphics[width=0.7\linewidth]{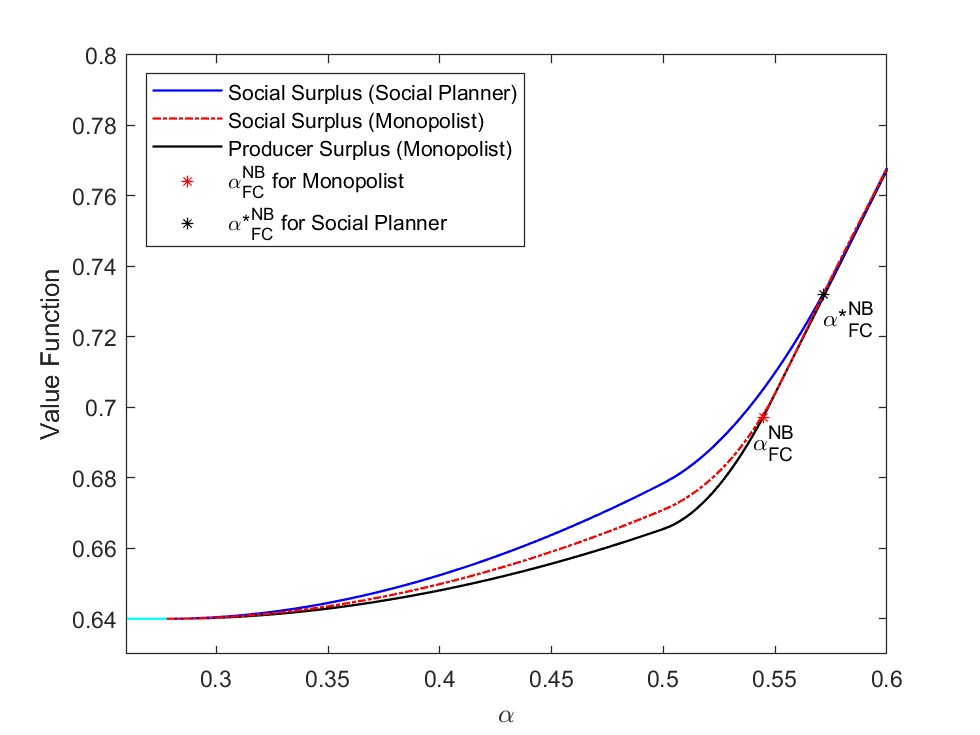}
	\caption{Social Welfare Comparison}
	\label{fig:newsocial}
\end{figure}

Under the monopolist setting, the disparity between social surplus and producer surplus arises because the seller incurs a loss of $b_t>c_t$ when successfully motivating agent $t$ to report, whereas for the social surplus, the loss is $c_t$. As a natural consequence, the value function of social surplus is always weakly higher than the value function of producer surplus. Furthermore, they coincide when there is no reporting, i.e., when $\alpha>\alpha_{FC}^{NB}$ or $\alpha<\alpha_{SA}^{PC}$. In Figure \ref{fig:newsocial}, $\alpha_{FC}^{NB}$ represents the monopolist switching cutoff, at which the red dotted line intersects with the black solid line.

This discrepancy also extends to the social planner's setting. Thus, if the social planner optimally designs the mechanism to maximize social surplus, they can outperform the monopolist, resulting in a value function that is also weakly higher than the value function of social surplus under the monopolist setting. As illustrated in Figure \ref{fig:pqplot}, the social planner switches between Partial Coverage and Safe Arm at the same cutoff as the monopolist but shifts to Full Coverage from Non Bonus at a higher cutoff. We use $\alpha_{FC}^{*NB}$ to denote the social planner's switching cutoff. For $\alpha>\alpha_{FC}^{*NB}$—since both the social planner and monopolist don't conduct experiments—their value functions should be the same. That's why in Figure \ref{fig:newsocial} both the black line and red dotted line intersect with the blue solid line when $\alpha>\alpha_{FC}^{*NB}$. 

In conclusion, this comparison reveals that social surplus under the monopolist setting is weakly lower than that under the social planner setting, suggesting that the monopolist would lead an inefficient learning process. Even when both learning processes terminate at the same cutoff, a lower learning speed adversely affects social surplus.

\subsection{Comparative Statics}
Here we do some comparative statics. We have discussed the effects of monopolist's behavior caused by discount rate $r$, relative utility $R_2$ , $\mathbb ER_1$ and $\mathbb \max \mathbb E\{R_1.R_2\}$. Here we add some new intuitions: how $\lambda$ (the perfect level of experiment) affects the monopolist's behavior in section \ref{comparative_lambda}, and how discount rate affects the social planner's behavior in section \ref{comparative_r}.  

\subsubsection{Comparative Statics for $\lambda$}
\label{comparative_lambda}
In this case we focus on discussing how the arrival rate $\lambda$ affects our results. Actually the larger arrival rate stands not only the risky arm is more attractive when the state is known to be good (since selling the risky arm leads to immediate profit $\alpha \lambda z$ ), but also the measure of the imperfect learning (since after purchasing the risky arm, there is probability $\alpha(1-e^{-\lambda \Delta})$ to get a positive feedback after an experiment with duration $\Delta$). So if we fix $s$, a larger $\lambda$ will make FC and NB more attractive hence we will see $\alpha_{SA}^{PC}$ and 
$\alpha_{PC}^{FC}$ will shrink to zero. A more meaningful way is to focus on the second power: how will the results change if the learning is ``less imperfect''. To do so we can re-define $s=\lambda z \rho$ where $\rho<1$ is a fixed constant who guarantees that the safe arm and risky arm is comparable in stage utility (i.e. to rule out the first effect). The intuition should be that if the learning is ``less imperfect'', then the seller will be more willing to do the experiment, hence we should expect a smaller $\alpha_{SA}^{PC}$ (and tends to 0 if $\lambda \to \infty$), and a larger $\alpha_{SA}^{PC}$ (and tends to 1 if $\lambda \to \infty$). The first intuition is easy to check since we have the analytical solution for $\alpha_{SA}^{PC}$. As for the $\alpha_{FC}^{NB}$, we have the following illustration Figure \ref{fig:lambda} by letting 
$\rho=0.5,r=0.5,z=7$ and we let $\lambda$ to grow from 0.2 to 2.

\begin{figure}[H]    
  \centering           
  \subfloat[$\lambda=0.3$]  
  {
\label{fig:subfig1}\includegraphics[width=0.4\textwidth]{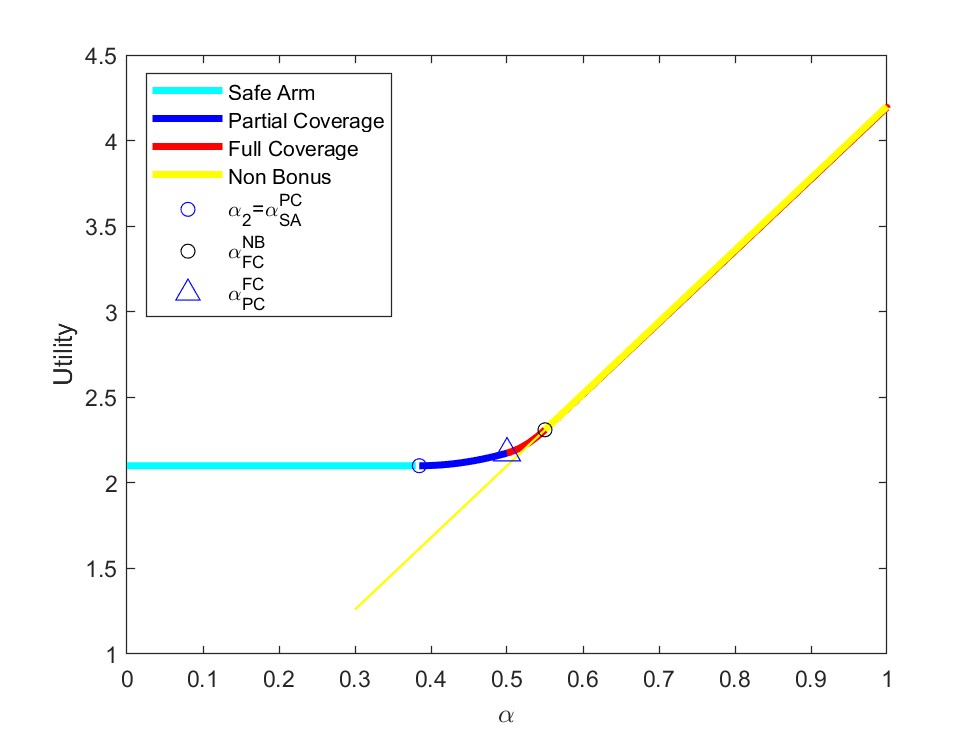}
  }
  \subfloat[$\lambda=0.8$]
  {
\label{fig:subfig2}\includegraphics[width=0.4\textwidth]{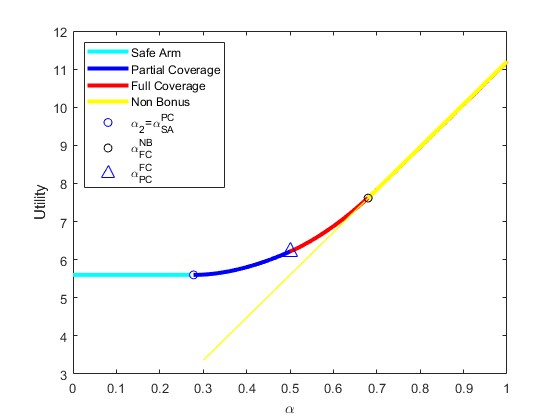}
  }
  
  \subfloat[$\lambda=1$]
  {
\label{fig:subfig3}\includegraphics[width=0.4\textwidth]{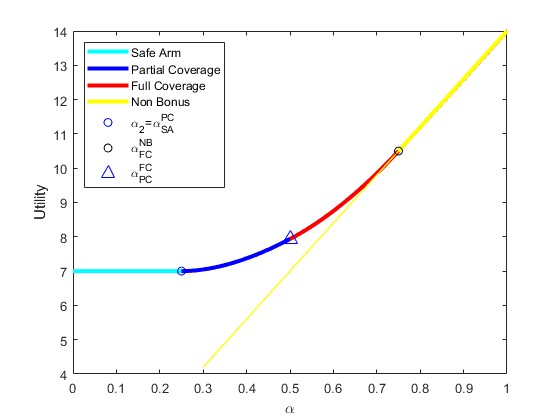}
  }
  \subfloat[$\lambda=2$]
  {
\label{fig:subfig2}\includegraphics[width=0.4\textwidth]{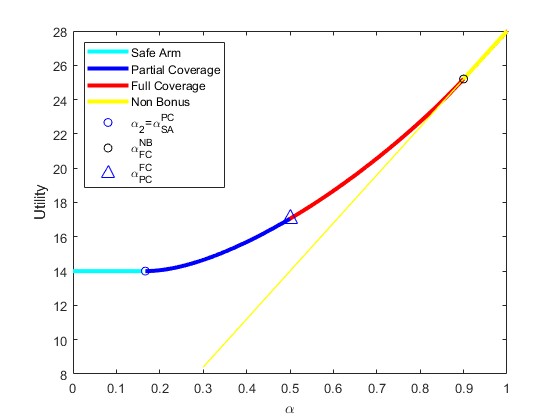}
  }
  \caption{$\lambda$ Affects $\alpha_{FC}^{NB}$}    
  \label{fig:lambda}            
\end{figure}

\subsubsection{Comparative Statics for $r$}
\label{comparative_r}
We do comparative statics here with respect to  $r$, the discount factor of the decision maker. A direct idea is value functions should be monotonically decreasing with respect to the discount rate $r$. Moreover, with higher $r$, the decision maker will be less willing to do experiments, no matter in Monopolist case or Social Planner case. An illustration (take social surplus as an example) is like Figure \ref{fig:r}:

\begin{figure}[H]    
  \centering            
  \subfloat[$r=0.2$]  
  {
\label{fig:subfig1}\includegraphics[width=0.4\textwidth]{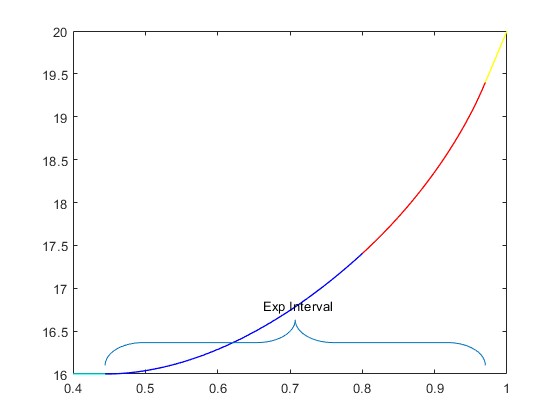}
  }
  \subfloat[$r=0.5$]
  {
\label{fig:subfig2}\includegraphics[width=0.4\textwidth]{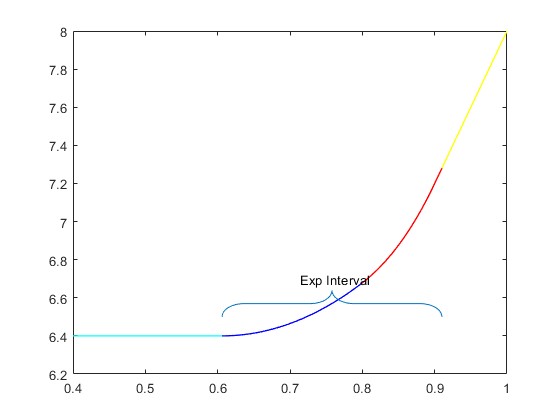}
  }
  
  \subfloat[$r=0.9$]
  {
\label{fig:subfig3}\includegraphics[width=0.4\textwidth]{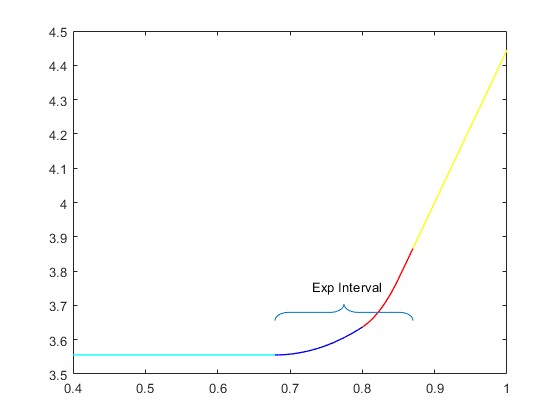}
  }
  \subfloat[$r=1.2$]
  {
\label{fig:subfig2}\includegraphics[width=0.4\textwidth]{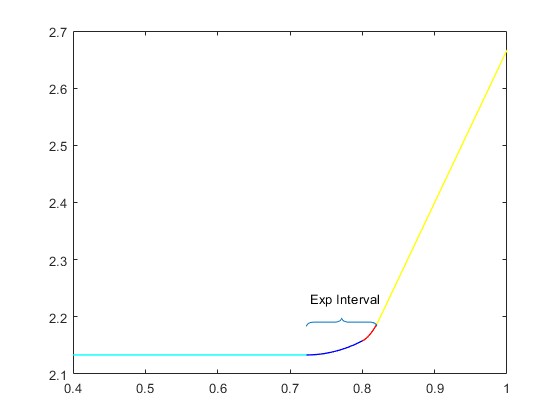}
  }
  \caption{$r$ Affects Experiment Incentive}    
  \label{fig:r}            
\end{figure}

$r$ here stands for the patience of the decision maker, hence more efficient experiment will be allowed when he is more patient. The decision whether to switch from Full Coverage to Partial Coverage seems like a myopic decision since $\underline \alpha_3$ is not affected by $r$: the monopolist stops selling risky arm and starts to sell safe arm when risky arm makes less profit. However, it's actually not the case. As we have mentioned before, Partial Coverage has the same learning speed with Full Coverage. Switching to Partial Coverage will not slow down the speed of learning the state, hence $r$ will not affect the switching belief here.
\newpage
\section{Immediate Revelation}

In Section \label{section_Imperfect_Learning}, we set up our model enforcing  \ref{assumption_barc_large} which allows us to ignore the potential Immediate Revelation strategy in Section \ref{section_Model_Analysis}. In this section we loose this assumption, and consider the case where $\bar c\to 0$. In this case, agents will report the utility of risky arm for sure after purchasing the risky arm. 

\label{section_IR}
\subsection{Bonus and Value Function}
\label{section_IR_value_function}
If Immediate Revelation is a potential strategy, then using this strategy, the seller would encourage the current agent to purchase the risky arm and report for sure. Hence it can be treated as a corner solution for both the Partial Coverage or Full Coverage strategy by setting $b_t=\bar c$. For stage payoff, we have
$\lim\limits_{\D \to 0}\frac{\pi_L(\D,b_t=\bar c)}{\D}=\lambda z \alpha -\bar c$ if $L=$IR. Since we have shown that the bonus structure is single peaked , monotonically increasing with respect to $\alpha$ when $\alpha<\alpha_{PC}^{FC}$ and monotonically decreasing with respect to $\alpha$ when $\alpha>\alpha_{PC}^{FC}$, the only possible solution is $\exists \alpha_{PC}^{IR}\le \alpha_{PC}^{FC}\le\alpha_{IR}^{FC}$, such that for $\alpha\in (\alpha_{PC}^{IR},\alpha_{IR}^{FC})$ the seller would optimally choose IR. In other words, the only optimal switching rule is 
\[SA\rightarrow PC\rightarrow (IR) \rightarrow FC \rightarrow NB\]
An illustration is Figure \ref{fig:IRpqplot}. Compare to Figure \ref{fig:pqplot}, $z$ is larger, so relatively the seller has more incentive to reveal the good state sooner. As a result, the bonus scheme will uniformly increase, making the upper bound $\bar c$ is possible to be tight. Hence there is an upper truncation in bonus scheme. The horizontal part stands for using Immediate Revelation.  
\begin{figure}[H]
	\centering
	\includegraphics[width=0.7\linewidth]{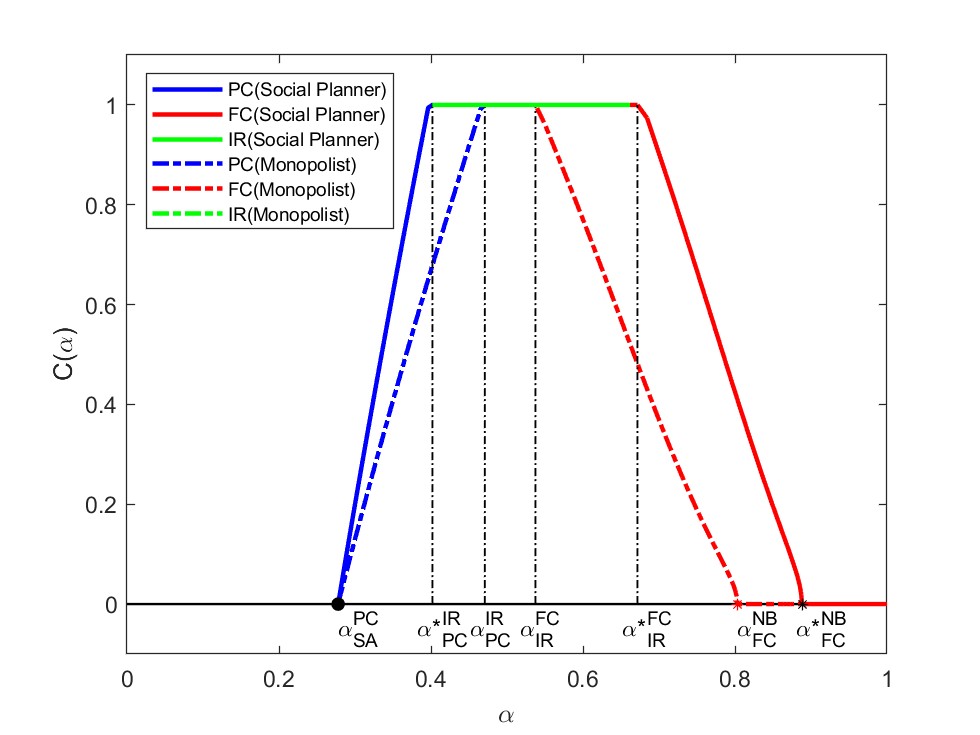}
	\caption{Mechanism Comparison with IR}
	\label{fig:IRpqplot}
\end{figure}
Now we can characterize the value function of seller. The law of motion like \eqref{result ode} also holds\footnote{Notice here if we allow this $Y(\cdot)$ to optimally switch to $\tilde U$, i.e., $\exists \tilde \alpha$ such that $Y(\tilde \alpha)=\tilde U=\frac{s}{r}$, and $Y'(\tilde \alpha)=0$, then the cutoff belief should satisfy $\tilde \alpha=\frac{r(s+\bar c)}{rg+\lambda (g-s)}$. A sufficient condition to ignore Immediate Revelation is $\tilde \alpha\ge 1$, which is our Assumption \ref{assumption_barc_large}.}: 
 \begin{equation}
    r Y(\alpha)= \lambda \alpha z(1+ \frac{\lambda}{r})-\bar c-\alpha \lambda Y(\alpha)+(\alpha^2-\alpha)\lambda Y'(\alpha)
    \label{IR_ode}
\end{equation}
with smooth pasting condition
\[U(\alpha_{PC}^{IR})=Y(\alpha_{PC}^{IR})
,U'(\alpha_{PC}^{IR})=Y'(\alpha_{PC}^{IR})\]
and
\[Y(\alpha_{IR}^{FC})=V(\alpha_{IR}^{FC}),Y'(\alpha_{IR}^{FC})=V'(\alpha_{IR}^{FC})\]
Compare to Figure \ref{fig:correct_fcl_pcl}
\begin{figure}[H]
	\centering
	\includegraphics[width=0.7\linewidth]{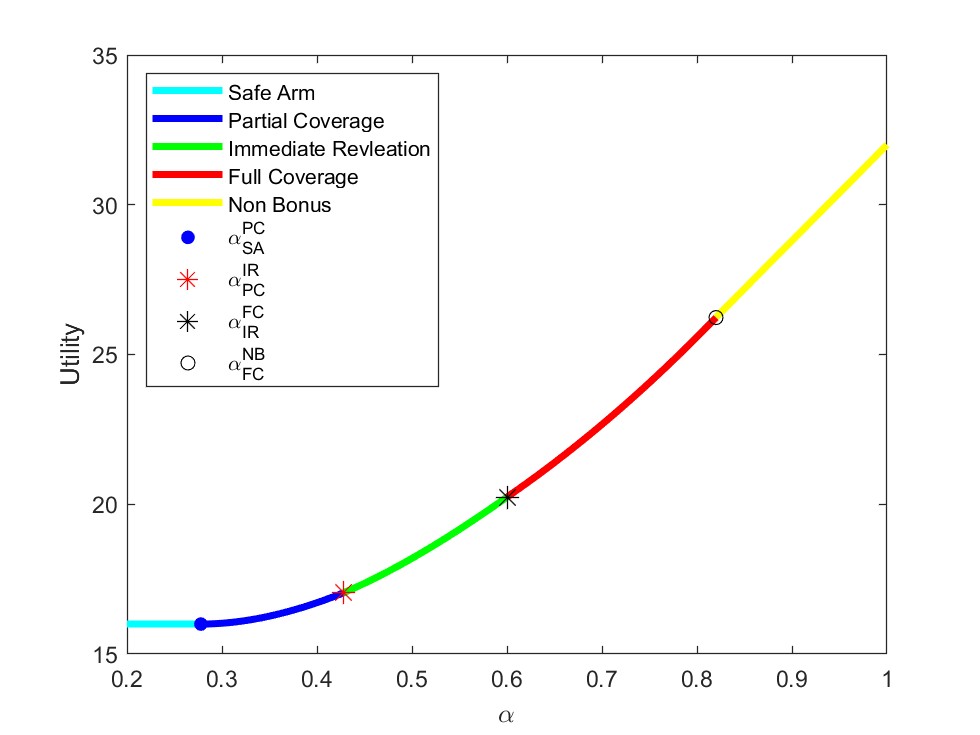}
	\caption{Optimal Switching Monopolist with IR}
	\label{fig:IRvalue_function}
\end{figure}
More details about the numerical example can be found in Appendix \ref{section_numerical_IR}.

\subsection{Comparison with \cite{bergemann2000experimentation}}
\cite{bergemann2000experimentation}'s  model can be treated as an extreme case where $\bar c\to 0$. Since in their model risky arm and safe arm are sold by two different sellers, only the social efficient allocation is comparable. Their main conclusion is, there exists a social efficient stopping level $\hat \alpha$ at which risky arm is myopically dominated by the safe arm. When $\alpha\in (\hat \alpha,1)$, the social planner should always encourage experiment on risky arm. Moreover, because of the Bertrand competition nature in their model, the risky product seller may lead to excess learning. In our model, we general the first part of the result: if there is reporting cost with support $[0,\bar c]$, the first best allocation is, under moderate belief level $[\alpha_{SA}^{PC},\alpha_{FC}^{NB}]$ to \textbf{screen out} the agent with lower reporting cost to do experiment on risky arm and report while allocate the other agent with myopically dominant one. If belief is extremely low, then with the same logic as in \cite{bergemann2000experimentation}, the society should give up exploration on the risky arm. If belief is extremely high, exploration would generate marginal social surplus compared to the reporting cost, hence the first best allocation is to set Non Bonus (i.e., let all agents buy risky arm without learning). With $\bar c\to 0$, $\alpha_{FC}^{NB} \to 1$ and $\alpha_{SA}^{PC}\to \hat \alpha$.

In their setting, agents are long lived with utility value funtion 
\[d u_i(t)=\mu d t+\sigma d W_i(t)\]
Their Pareto efficient stopping cutoff is claimed in the following Theorem 1:

\textbf{Theorem 1.} \textit{(Efficient Stopping). The Pareto efficient stopping point $\hat{\alpha}$ is
$$
\hat{\alpha}=\frac{\left(s-\mu_L\right)(\gamma-1)}{\left(\mu_H-\mu_L\right)(\gamma-1)+2\left(\mu_H-s\right)},
$$
with
$$
\gamma=\sqrt{1+\frac{8 r \sigma^2}{N\left(\mu_H-\mu_L\right)^2}}
$$}

Numerically, if we set $N=1$, $\mu_L=0$, $\mu_H=g=\lambda z$, $\lambda=2r$, $\sigma^2=\frac{3g^2}{4\lambda}$, and $\gamma=2$, then $\hat{\alpha}=\alpha_{SA}^{PC}$. However, although the models are similar, they are not identical. In their model, the agent's utility function has variance $\lim\limits_{\Delta\to 0} \frac{Var(u(\Delta))}{\Delta}=\sigma^2$, which is \textbf{irrelevant} to the current belief. In our model, the utility within $\Delta$ can be considered as a Bernoulli distribution, with a probability of $1-\gamma$ to realize utility $z$ and $\gamma$ to realize 0. Hence, the variance satisfies $\lim\limits_{\Delta\to 0}\frac{Var(u_t(\Delta))}{\Delta}=\frac{\alpha g^2}{\lambda}$. A more detailed illustration is Figure \ref{pqplot_multibarc}, where  Figure \ref{subfig1} shows that when $\bar c\to0$, the monopolist's bonus would shrink to a cutoff strategy like $\epsilon \to0^{+}, \forall \alpha>\alpha_{SA}^{PC}$, with the strategy defined in Section \ref{section_Model_Analysis}, the monopolist would allocate all market demand to risky arm when $\alpha>\alpha_{SA}^{PC}$, otherwise safe arm. Figure \ref{subfig2} shows the same logic works for the social planner. More detailed discussion can be found in online appendix. 
\begin{figure}[H]    
  \centering            
  \subfloat[Monopolist]  
  {
\label{subfig1}\includegraphics[width=0.4\textwidth]{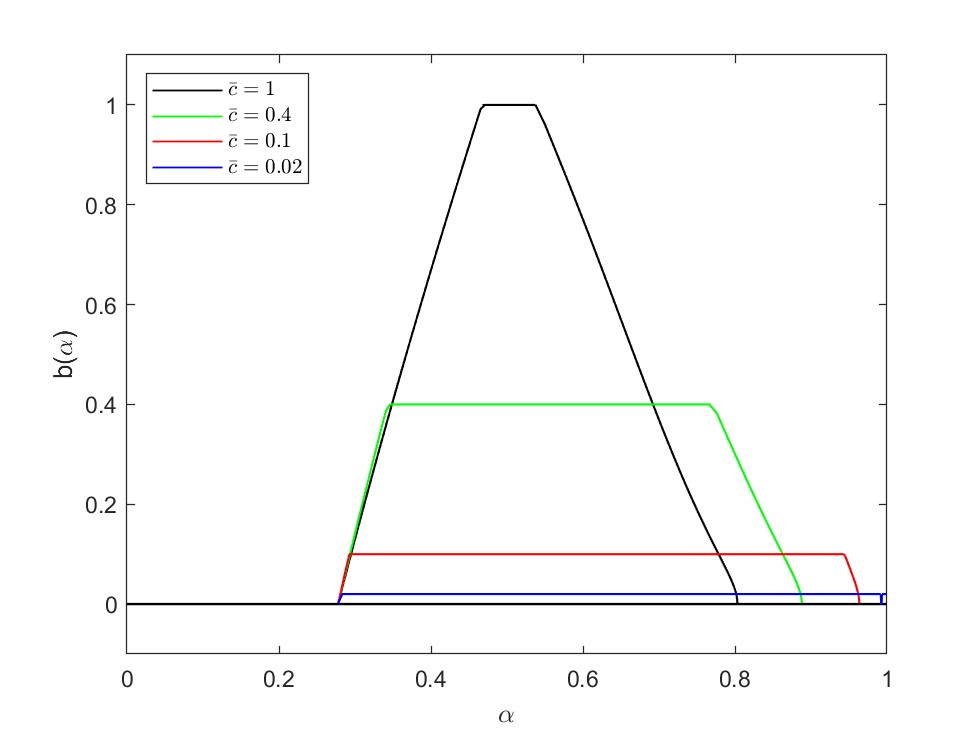}
  }
  \subfloat[Social Planner]
  {
\label{subfig2}\includegraphics[width=0.4\textwidth]{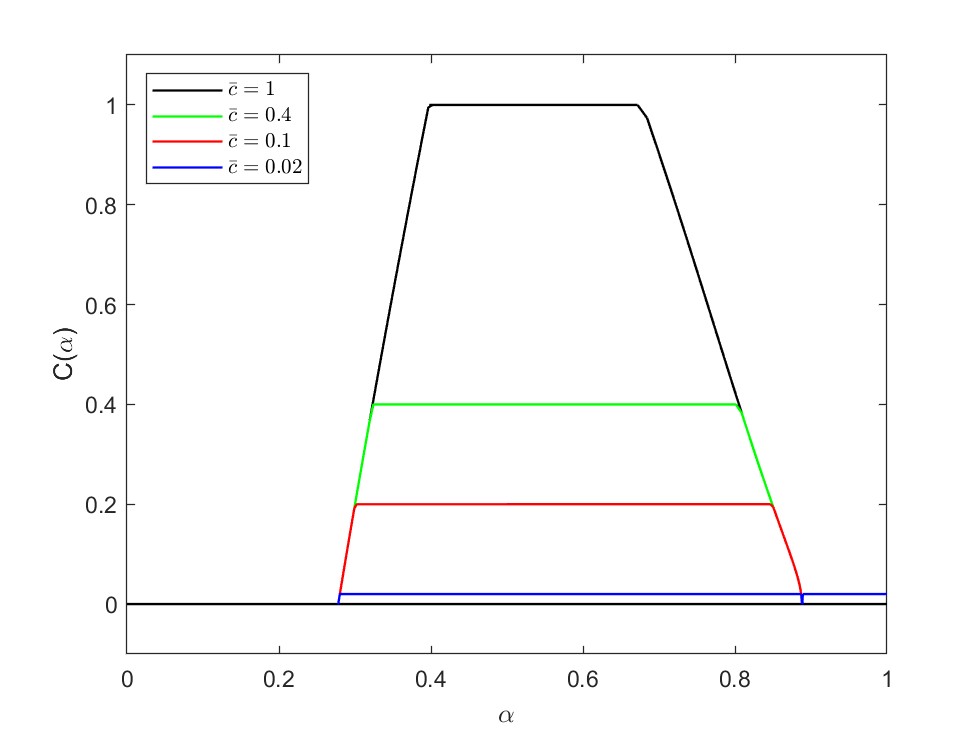}
  }       
  \caption{When $\bar c\to 0$}
  \label{pqplot_multibarc}
\end{figure}

\section{Conclusion}
The monopolist always has the ability to dynamically design pricing schemes for risky products, hence even when risky products are myopically less profitable than safe products, the monopolist has motivation to continue learning. 
If there are reporting costs for agents, the monopolist also needs to jointly design bonus schemes, which essentially allow sellers to influence the speed of learning without distorting market demand through price manipulation. When risky products are less profitable but still worth exploring, sellers can employ a Partial Coverage strategy by selling safe products to agents who won't report, thus increasing overall profits while continuing to incentivize exploration through bonus design. Rational sellers should switch from Full Coverage to Partial Coverage rather than stopping experiments immediately. As sellers hold profits from both types of products, the learning process steered by sellers terminates at a socially optimal level, yet the presence of reporting costs slows down the learning pace in each period.


\begin{thebibliography}{}

\bibitem[\protect\citeauthoryear{Avery, Resnick, and Zeckhauser}{Avery et~al.}{1999}]{avery1999market}
Avery, C., P.~Resnick, and R.~Zeckhauser (1999).
\newblock The market for evaluations.
\newblock {\em American Economic Review\/}~{\em 89\/}(3), 564--584.

\bibitem[\protect\citeauthoryear{Banerjee}{Banerjee}{1992}]{banerjee1992simple}
Banerjee, A.~V. (1992).
\newblock A simple model of herd behavior.
\newblock {\em The Quarterly Journal of Economics\/}~{\em 107\/}(3), 797--817.

\bibitem[\protect\citeauthoryear{Bergemann and V{\"a}lim{\"a}ki}{Bergemann and V{\"a}lim{\"a}ki}{1996}]{bergemann1996learning}
Bergemann, D. and J.~V{\"a}lim{\"a}ki (1996).
\newblock Learning and strategic pricing.
\newblock {\em Econometrica: Journal of the Econometric Society\/}, 1125--1149.

\bibitem[\protect\citeauthoryear{Bergemann and V{\"a}lim{\"a}ki}{Bergemann and V{\"a}lim{\"a}ki}{2000}]{bergemann2000experimentation}
Bergemann, D. and J.~V{\"a}lim{\"a}ki (2000).
\newblock Experimentation in markets.
\newblock {\em The Review of Economic Studies\/}~{\em 67\/}(2), 213--234.

\bibitem[\protect\citeauthoryear{Bergemann and V{\"a}lim{\"a}ki}{Bergemann and V{\"a}lim{\"a}ki}{2006}]{bergemann2006dynamic}
Bergemann, D. and J.~V{\"a}lim{\"a}ki (2006).
\newblock Dynamic pricing of new experience goods.
\newblock {\em Journal of Political Economy\/}~{\em 114\/}(4), 713--743.

\bibitem[\protect\citeauthoryear{Bikhchandani, Hirshleifer, and Welch}{Bikhchandani et~al.}{1992}]{bikhchandani1992theory}
Bikhchandani, S., D.~Hirshleifer, and I.~Welch (1992).
\newblock A theory of fads, fashion, custom, and cultural change as informational cascades.
\newblock {\em Journal of political Economy\/}~{\em 100\/}(5), 992--1026.

\bibitem[\protect\citeauthoryear{Bolton and Harris}{Bolton and Harris}{1999}]{bolton1999strategic}
Bolton, P. and C.~Harris (1999).
\newblock Strategic experimentation.
\newblock {\em Econometrica\/}~{\em 67\/}(2), 349--374.

\bibitem[\protect\citeauthoryear{Bose, Orosel, Ottaviani, and Vesterlund}{Bose et~al.}{2006}]{bose2006dynamic}
Bose, S., G.~Orosel, M.~Ottaviani, and L.~Vesterlund (2006).
\newblock Dynamic monopoly pricing and herding.
\newblock {\em The RAND Journal of Economics\/}~{\em 37\/}(4), 910--928.

\bibitem[\protect\citeauthoryear{Bose, Orosel, Ottaviani, and Vesterlund}{Bose et~al.}{2008}]{bose2008monopoly}
Bose, S., G.~Orosel, M.~Ottaviani, and L.~Vesterlund (2008).
\newblock Monopoly pricing in the binary herding model.
\newblock {\em Economic Theory\/}~{\em 37\/}(2), 203--241.

\bibitem[\protect\citeauthoryear{Burtch, Hong, Bapna, and Griskevicius}{Burtch et~al.}{2018}]{burtch2018stimulating}
Burtch, G., Y.~Hong, R.~Bapna, and V.~Griskevicius (2018).
\newblock Stimulating online reviews by combining financial incentives and social norms.
\newblock {\em Management Science\/}~{\em 64\/}(5), 2065--2082.

\bibitem[\protect\citeauthoryear{Cabral and Li}{Cabral and Li}{2015}]{cabral2015dollar}
Cabral, L. and L.~Li (2015).
\newblock A dollar for your thoughts: Feedback-conditional rebates on ebay.
\newblock {\em Management Science\/}~{\em 61\/}(9), 2052--2063.

\bibitem[\protect\citeauthoryear{Einav, Farronato, and Levin}{Einav et~al.}{2016}]{Einav2016}
Einav, L., C.~Farronato, and J.~Levin (2016).
\newblock Peer-to-peer markets.
\newblock {\em Annual Review of Economics\/}~{\em 8}, 615--635.

\bibitem[\protect\citeauthoryear{Fradkin, Grewal, Holtz, and Pearson}{Fradkin et~al.}{2015}]{fradkin2015bias}
Fradkin, A., E.~Grewal, D.~Holtz, and M.~Pearson (2015).
\newblock Bias and reciprocity in online reviews: Evidence from field experiments on airbnb.
\newblock {\em EC\/}~{\em 15}, 15--19.

\bibitem[\protect\citeauthoryear{Hagiu and Wright}{Hagiu and Wright}{2020}]{hagiu2020data}
Hagiu, A. and J.~Wright (2020).
\newblock Data-enabled learning, network effects and competitive advantage.
\newblock In {\em working paper}.

\bibitem[\protect\citeauthoryear{He, Zhang, Kan, and Chua}{He et~al.}{2016}]{he2016fast}
He, X., H.~Zhang, M.-Y. Kan, and T.-S. Chua (2016).
\newblock Fast matrix factorization for online recommendation with implicit feedback.
\newblock In {\em Proceedings of the 39th International ACM SIGIR conference on Research and Development in Information Retrieval}, pp.\  549--558.

\bibitem[\protect\citeauthoryear{Keller, Rady, and Cripps}{Keller et~al.}{2005}]{KRC2005}
Keller, G., S.~Rady, and M.~Cripps (2005).
\newblock Strategic experimentation with exponential bandits.
\newblock {\em Econometrica\/}~{\em 83\/}(1), 39--68.

\bibitem[\protect\citeauthoryear{Khern-am nuai, Kannan, and Ghasemkhani}{Khern-am nuai et~al.}{2018}]{khern2018extrinsic}
Khern-am nuai, W., K.~Kannan, and H.~Ghasemkhani (2018).
\newblock Extrinsic versus intrinsic rewards for contributing reviews in an online platform.
\newblock {\em Information Systems Research\/}~{\em 29\/}(4), 871--892.

\bibitem[\protect\citeauthoryear{Kremer, Mansour, and Perry}{Kremer et~al.}{2014}]{kremer2014implementing}
Kremer, I., Y.~Mansour, and M.~Perry (2014).
\newblock Implementing the “wisdom of the crowd”.
\newblock {\em Journal of Political Economy\/}~{\em 122\/}(5), 988--1012.

\bibitem[\protect\citeauthoryear{Krishna and Morgan}{Krishna and Morgan}{2012}]{krishna2012voluntary}
Krishna, V. and J.~Morgan (2012).
\newblock Voluntary voting: Costs and benefits.
\newblock {\em Journal of Economic Theory\/}~{\em 147\/}(6), 2083--2123.

\bibitem[\protect\citeauthoryear{Li}{Li}{2010}]{li2010reputation}
Li, L. (2010).
\newblock Reputation, trust, and rebates: How online auction markets can improve their feedback mechanisms.
\newblock {\em Journal of Economics \& Management Strategy\/}~{\em 19\/}(2), 303--331.

\bibitem[\protect\citeauthoryear{Osborne, Rosenthal, and Stewart}{Osborne et~al.}{2020}]{osborne2020information}
Osborne, M.~J., J.~S. Rosenthal, and C.~Stewart (2020).
\newblock Information aggregation with costly reporting.
\newblock {\em The Economic Journal\/}~{\em 130\/}(625), 208--232.

\bibitem[\protect\citeauthoryear{Rothschild}{Rothschild}{1974}]{rothschild1974bandit}
Rothschild, M. (1974).
\newblock A two-armed bandit theory of market pricing.
\newblock {\em Journal of Economic Theory\/}~{\em 9\/}(2), 185--202.

\bibitem[\protect\citeauthoryear{Stokey}{Stokey}{1989}]{SLP}
Stokey, N.~L. (1989).
\newblock {\em Recursive methods in economic dynamics}.
\newblock Harvard University Press.

\bibitem[\protect\citeauthoryear{Tadelis, Zhou, et~al.}{Tadelis et~al.}{2016}]{tadelis2016buying}
Tadelis, S., X.~Zhou, et~al. (2016).
\newblock Buying reputation as a signal of quality: Evidence from an online marketplace.
\newblock Technical report, National Bureau of Economic Research.

\bibitem[\protect\citeauthoryear{Weng}{Weng}{2015}]{weng2015dynamic}
Weng, X. (2015).
\newblock Dynamic pricing in the presence of individual learning.
\newblock {\em Journal of Economic Theory\/}~{\em 155}, 262--299.

\bibitem[\protect\citeauthoryear{Ying, Feinberg, and Wedel}{Ying et~al.}{2006}]{ying2006leveraging}
Ying, Y., F.~Feinberg, and M.~Wedel (2006).
\newblock Leveraging missing ratings to improve online recommendation systems.
\newblock {\em Journal of Marketing Research\/}~{\em 43\/}(3), 355--365.

\end{thebibliography}

\newpage

\appendix
\newpage
\section{Appendix: Proofs for Continuous Case}
\renewcommand\thefigure{\Alph{section}\arabic{figure}} 
\renewcommand\thefigure{\Alph{section}\arabic{table}} 
\setcounter{figure}{0}  

\subsection{Proof: Formation of Law of Motion in Section \ref{Section_preliminary_analysis}}
\label{appendix_proof_pre}
In the formal context we have shown that the value function $\Pi(\cdot)$ follows the dynamic process:
\begin{equation}
\Pi(\alpha)=\max\limits_{L,b,R} \pi_{1,L} (\Delta)+P_L(b,R)e^{-r\Delta}\gamma(\alpha,\Delta) \Pi(\alpha')+
	P_L(b,R) e^{-r\Delta}(1-\gamma(\alpha,\Delta)) \tilde V
	+(1-P_L(b,R))e^{-r\Delta}\Pi(\alpha)
\label{appendix_Pi}
\end{equation}

where \[\gamma(\alpha,\Delta)=1-\alpha(1-e^{-\lambda \Delta})\] 
$P_L(b)$ stands for the probability of motivating the current agent to report successfully under strategy $L$.

Combining with Bayesian updating rule
\[
\alpha'=\frac{\alpha e^{-\lambda \Delta}}{\alpha e^{-\lambda \Delta}+1-\alpha}
\]
hence 
\[
\alpha'-\alpha=\frac{(\alpha-\alpha^2) (e^{-\lambda \Delta}-1) }{\alpha e^{-\lambda \Delta}+1-\alpha}
\label{appendix_bayes}
\]
So $\alpha'-\alpha\to 0$ if $\Delta\to 0$. Moreover, we have
\[\lim_{\Delta\to 0}\frac{\alpha'-\alpha}{\Delta}=(\alpha^2-\alpha)\lambda\]
and
\[\lim_{\Delta\to 0}\frac{(\alpha'-\alpha)^2}{\Delta}=0\]

We use the Taylor expansion with Lagrangian remainder for $\Pi(\alpha)$, so there exists $\alpha\ge \xi\ge \alpha'$ such that
\[\Pi(\alpha')=\Pi(\alpha)+\Pi'(\alpha)(\alpha'-\alpha)+\frac{\Pi''(\xi)}{2}
(\alpha'-\alpha)^2\]

So the second term of equation \ref{appendix_Pi} can be expressed as
\[P_L(b,R)e^{-r\Delta}\gamma(\alpha,\Delta) [\Pi(\alpha)+\Pi'(\alpha)(\alpha'-\alpha)+\frac{\Pi''(\xi)}{2}
(\alpha'-\alpha)^2]\]
hence equation \ref{appendix_Pi} can be expressed as
\[
\begin{aligned}
	\Pi(\alpha)[1-(1-P_L(b,R))e^{-r\Delta}-P_L(b,R)e^{-r\Delta}\gamma(\alpha,\Delta)]=\\ \max\limits_{L,b,R} \pi_{1,L} (\Delta)+P_L(b,R)e^{-r\Delta}\gamma(\alpha,\Delta) [\Pi'(\alpha)(\alpha'-\alpha)+\frac{\Pi''(\xi)}{2}
	(\alpha'-\alpha)^2]+
	P_L(b,R) e^{-r\Delta}(1-\gamma(\alpha,\Delta)) \tilde V
\end{aligned}
\]
Exerting operation $\lim\limits_{\Delta\to0}\frac{\cdot}{\Delta}$ on both sides, we can get $\Pi(\alpha)$ actually follows the ordinary differential equation (here we exchange the order of $\max$ and $\lim$ for the first term in right hand side):
\[
	[r+P_L(b)\alpha \lambda ] \Pi(\alpha)=  \max\limits_{L,b,R} \lim_{\Delta\to0}\frac{\pi_{1,L} (\Delta)}{\Delta}+P_L(b)(\alpha^2-\alpha)\lambda \Pi'(\alpha)+P_L(b) \frac{\alpha \lambda^2 z }{r}
\]
With Berge's Maximum theorem, since the feasible field is tight (for $b$ and $R$ there is lower bound 0 and upper bound $\bar c$), $\Pi(\alpha)$ is continuous. 

\subsection{Proof of Lemma \ref{fconly_lemma} (Full Coverage Only Monopoly)}
\label{proof_fconly}
We have 
 \[
 rV(\alpha)=\max\limits_{b}\alpha z \lambda[1+H(b)\frac{\lambda}{r}]- H(b)b+H(b)(\alpha^2-\alpha)\lambda V'(\alpha)-H(b)\lambda\alpha V(\alpha)
 \tag{FC-ODE}
 \]
For interior solutions the following F.O.C with respect to $b$ holds:
\begin{equation}
    h (b)\lambda\alpha V(\alpha)=\alpha z \lambda h(b)\frac{\lambda}{r}- h(b)b- H(b)+h(b)(\alpha^2-\alpha)\lambda V'(\alpha)
    \label{appendix_fc_foc}
\end{equation}

F.O.C can be written as:
\begin{equation}
    \beta(b)=\frac{\alpha z \lambda^2}{r} +(\alpha^2-\alpha)\lambda V'(\alpha)-\alpha\lambda V(\alpha) 
    \label{appendix_foc_beta}
\end{equation}

Assumption \ref{assumption:monotone_beta} requires that 
$2h^2(b)-h'(b)H(b)> 0$ for $\forall b \in [0,\bar c]$. Hence to make sure the interior solution is a local maximum, we need the following S.O.C.:
\begin{equation}
    \alpha z \lambda h'(b)\frac{\lambda}{r}- h'(b)b-h(b)- h(b)+h'(b)(\alpha^2-\alpha)\lambda V'(\alpha) -h '(b)\lambda\alpha V(\alpha) \le 0
    \label{appendix_fc_soc}
\end{equation}

at the interior solution. Combined with equation \ref{appendix_fc_foc}, the equation \ref{appendix_fc_soc} is  actually equivalent to 
\[h'(b)H(b)-2 h^2(b) \le 0\]
Hence S.O.C. is satisfied. Then we prove the first part of Lemma \ref{fconly_lemma}.

Plug the equation \ref{appendix_fc_foc} into equation (FC-ODE) we can get:
\begin{equation}
   r V(\alpha)=\alpha z \lambda+ \frac{H^2(b^*)}{h(b^*)} \label{appendix_V_alpha_express} 
\end{equation}
 So we prove the second part of Lemma \ref{fconly_lemma}.

At stopping belief (we use $\underline \alpha_1$ to stand for the cutoff point here in order to make difference) we have $V(\underline{\alpha_1})=\tilde U$. Moreover, we have smooth pasting condition $V'(\underline{\alpha_1})=0$. We denote $\underline b_1$ for optimal bonus when belief is $\underline{\alpha_1}$. Hence $\underline b_1$ maximizes $V(\alpha)$ at $\underline{\alpha_1}$, plug into equation \ref{appendix_foc_beta} and \ref{appendix_V_alpha_express}, we can get:
\[
 \beta(\underline b_1)=\frac{\underline \alpha_1  \lambda ^2 z -\underline \alpha_1  \lambda s}{r}
\]
and :
\[
s =\underline \alpha_1  z \lambda+ \frac{H^2(\underline b_1)}{h(\underline b_1)}
\]

So we prove the third part of Lemma \ref{fconly_lemma}.

As for the monotonicity of $b(\alpha)$, since equation \ref{appendix_V_alpha_express} shows that
\[rV(\alpha)-\alpha z\lambda= \frac{H^2(b)}{h(b)}\] and right hand side of this equation is monotonically increasing with $b$, we only need to show the left hand side is monotonically decreasing with $\alpha$. And let $r V'(\alpha)-z\lambda=k$, from equation \ref{appendix_foc_beta} and \ref{appendix_V_alpha_express}  we have \[\frac{(\alpha^2-\alpha)\lambda k}{r}=\lambda \beta(b)+\frac{\alpha \lambda}{r} \frac{H^2(b)}{h(b)}>0\] hence we have $k<0$ and complete our proof of the forth part.

\subsection{Proof of Lemma \ref{pconly_lemma} (Partial Coverage Only Monopolist)}
The value function here fits
 \[
 r U(\alpha)=\max_b H(b) (\lambda \alpha z-b) +[1-H(b)]s+H(b)\frac{\alpha \lambda^2 z}{r}+H(b)(\alpha^2-\alpha)\lambda U'(\alpha)-H(b)\alpha \lambda U(\alpha)
 \tag{PC-ODE}
 \]
F.O.C respect to $b$ holds:
\[
    h(b)\alpha \lambda U(\alpha)=h(b) (\lambda \alpha z-b)-H(b) -h(b)s+h(b)(\alpha^2-\alpha)\lambda U'(\alpha)+h(b)\frac{\alpha \lambda^2 z}{r}
\]
which can be written as :
\[
\beta(b)=\lambda \alpha z - s+\frac{\alpha z \lambda^2}{r}+(\alpha^2-\alpha)\lambda U'(\alpha)-\lambda \alpha U(\alpha)
\]
Similar as in section \ref{proof_fconly}, S.O.C. is satisfied hence the interior solution here maximizes the value function. Hence we prove the first part of Lemma \ref{pconly_lemma}.
By plugging the F.O.C into expression of $U(\alpha)$ we can get Bellman Equation and prove the second part of Lemma \ref{pconly_lemma}.:
\begin{equation}
    rU(\alpha)=s+\frac{H^2(b)}{h(b)} 
    \label{appendix_pc_BE}
\end{equation}

Still, at stopping belief $\underline \alpha_2$ we have the following boundary condition:
\[
U(\underline \alpha_2)=\Tilde U,U'(\underline \alpha_2)=0 
\]
 hence
\[
\underline \alpha_2=\frac{sr}{\lambda (g-s)+gr}
\]
and 
\[
\underline b_2= 0 
\] 
So the third part is proven. As for the fourth part, actually $b$ shares the same monotonicity as $U(\alpha)$ from equation \ref{appendix_pc_BE}. So the proof is completed. 

\subsection{Proof of Proposition \ref{optimal_switching_proposition} (Optimal Switching Monopolist)}
To finish the proof of proposition 5.3 we only need to show that $\underline \alpha_{PC}^{FC}=\frac{s}{z\lambda}$.
We denote $\underline \alpha_3 = \underline \alpha_{PC}^{FC}$, and $\underline b_3$, $\underline R_3$ for the corresponding optimal $b$ or $R$.
Value these equation from  Lemma \ref{fconly_lemma} and \ref{pconly_lemma} at $\underline \alpha_{PC}^{FC}$ and combined with our smooth pasting condition
\[U(\underline \alpha_3)=V(\underline \alpha_3),U'(\underline \alpha_3)=V'(\underline \alpha_3)\]we have: 
 \[\frac{s-\underline \alpha_3z\lambda}{\lambda}=\beta(\underline b_3)-\beta(\underline R_3)=\frac{H^2(\underline b_3)}{h(\underline b_3)}-\frac{H^2(\underline R_3)}{h(\underline R_3)}\]
Denote $Q(x)=\frac{H^2(x)}{h(x)}-\beta(x)$, we have (from our assumption 3.2, $2h^2(x)-H(x)h'(x)>0$): \[Q'(x)=\frac{2h^2(x)-H(x)h'(x)}{h^(x)}[H(x)-1]<0\] 
Hence we get $\underline b_3=\underline R_3$, and $\underline \alpha_3=\frac{s}{z\lambda}$.

\subsection{Proof of Proposition \ref{proposition_socialplanner}(Welfare Analysis)}
\label{Proof_proposition_socialplanner}
\subsubsection{Formation of Social Planner's Problem }
\label{Section_formation_social_planner}
For the social planner, in period $\Delta$, if the current agent reports $\tilde c$ (we use $c$ to replace $\tilde c$ since in equilibrium we have truthfully reporting), his immediate gain from total social surplus, should be
\[\begin{aligned}
        \tilde \pi(\triangle,\alpha ,c)&=p(\alpha, c) \int_0^{\Delta} \lambda e^{-\lambda t} e^{-rt} \alpha z dt -p(\alpha, c)q(\alpha, c) \int_0^{\Delta} e^{-rt}c dt +\int_0^{\Delta}e^{-rt}[1-p(\alpha, c)]sdt \\&= \frac{1-e^{-(r+\lambda)\Delta}}{r+\lambda} \lambda p(\alpha, c)\alpha z -\frac{1-e^{-r\D}}{r} p(\alpha, c)q(\alpha, c)c+\frac{1-e^{-r\Delta}}{r}[1-p(\alpha, c)]s
\end{aligned}\]
from which we can get
\[\lim\limits_{\Delta\to 0}\frac{\tilde \pi(\Delta,\alpha ,c)}{\Delta}=\lambda p(\alpha,c)\alpha z -p(\alpha, c)q(\alpha, c)c+[1-p(\alpha, c)]s\]
We denote $\hat W(\alpha,c)$ as the value function of the social planner, contingent on knowing current agent's reporting cost $c$, and $W(\alpha) = \mathbb E_{c} \hat W(\alpha,c)$ as the expected value function. We denote $p=p(\alpha,\tilde c)$ and $q=q(\alpha,\tilde c)$ for simplicity, and again denote $\gamma (\alpha,\Delta)=1-\alpha(1-e^{-\lambda \Delta })$ for probability of reporting a zero utility. Then they must fit:
\[\hat W(\alpha,c)=\max_{p,q,t}\tilde \pi(\Delta,c)+pq e^{-r\Delta} \gamma(\alpha,\Delta)W(\alpha')+pq e^{-r\Delta} [1-\gamma(\alpha,\Delta)]\tilde V+(1-pq)e^{-r\Delta} W(\alpha)\]
and
\[W(\alpha)=\mathbb E_c  \hat W(\alpha, c)\]

Denote $A=\mathbb E_{c} (pq) (1-\gamma(\alpha,\Delta))$, then 
\[\begin{split}
0=\max_{p,q,t} \mathbb E_c\lim\limits_{\Delta \to 0}\frac{\tilde \pi(\Delta,\alpha ,c)}{\Delta}+\mathbb E_{c} (pq) \gamma W'(\alpha) (\alpha^2-\alpha)\lambda+\mathbb E_{c} (pq)\frac{\alpha z \lambda^2}{r}-\lim\limits_{\Delta \to 0} W(\alpha) \frac{1-e^{-r\Delta}+Ae^{-r\Delta}}{\Delta}
\end{split}\]
as a result, we get the following Bellman Equation:
\[rW(\alpha)=\lambda \alpha z \mathbb E_c(p)+[1-\mathbb E_c(p)] s-\mathbb E_c(pqc)+\frac{\lambda^2 z\alpha }{r} \mathbb E_c(pq)+(\alpha^2-\alpha)\lambda W'(\alpha)\mathbb E(pq)-\lambda \alpha \mathbb E_c(pq)W(\alpha)\]

\subsubsection{Social Planner's Optimization}
\label{proof_social_planner_optimization}

We should expect a pair of corner solution given the linear formation. To characterize the optimization solution, we first pin down the optimal $q(\alpha,\cdot)$. Put all terms to the right hand side, the terms contain $q$ are (and we should maximize it)
\[\mathbb E(qp) B(\alpha)-\lambda \mathbb E(qpc)\]
where $B(\alpha)\equiv \frac{\alpha\lambda^2 z}{r}+(\alpha^2-\alpha)\lambda W'(\alpha)-\lambda \alpha W(\alpha)$ is irrelevant to $c$.
\[\mathbb E(qp) B(\alpha)-\lambda \mathbb E(qpc)=\int_c q(\alpha,c)[B(\alpha) p(\alpha,c)-\lambda  p(\alpha,c) c] dH(c)\]
Hence to maximize it we have to set
\[q(\alpha,c)=\left \{
\begin{aligned}
	&1 &\text{ where } B(\alpha) -\lambda   c\ge 0\\
	&0 &\text{ otherwise } 
\end{aligned} \right.\]
Hence in optimal solutions, we have 
\[\mathbb E(qp) B(\alpha)-\lambda \mathbb E(qpc)=\int_c p(\alpha,c)\max\{B(\alpha) -\lambda  c,0\} dH(c)\]
and the original equations requires optimal $p(\alpha,\cdot)$ to maximize
\[rW(\alpha)-s=\max_p (\lambda z \alpha -s)\mathbb E(p)+\int_c p(\alpha,c)\max\{B(\alpha) -\lambda  c,0\} dH(c)\]

Hence if $\alpha>\frac{s}{\lambda z}$, optimal $p(\alpha,c)$ should be 1 for all $c$. Otherwise, it should be 1 where $\lambda z \alpha -s+B(\alpha) -\lambda c>0$. To conclude, the optimal allocation rule $p(\alpha,c)$ should take the form

\[p(\alpha,c)=\left \{
\begin{aligned}
	&1 &\text{ where } \max\{\lambda z \alpha -s,B(\alpha)-\lambda c+\lambda z \alpha -s\}>0\\
	&0 &\text{ otherwise } 
\end{aligned} \right.\]
Hence for $1>\alpha>\frac{s}{\lambda z}$,we have 
\[rW(\alpha)=\lambda z\alpha  +\int_{0<c<\frac{B(\alpha)}{\lambda}} B(\alpha)-\lambda c dH(c)\]
and for $0<\alpha<\frac{s}{\lambda z}$,we have 
\[rW(\alpha)-s=\int_{0<c<\frac{B(\alpha)+\lambda z\alpha -s}{\lambda}} \lambda z\alpha -s+B(\alpha)-\lambda c dH(c)\] 

Then we consider the optimal switching problem, the social planner will set $p=q=0$ to sell safe arm at some cutoff belief $\underline \alpha_5<\frac{s}{\lambda z}$, with $W(\underline \alpha_5)=\frac{s}{r}$ and $W'(\underline \alpha_5)=0$. Hence according to our definition, we have $B(\underline \alpha_5)=\frac{\underline\alpha_5\lambda (\lambda z-s)}{r}$. As a conclusion, at $\underline \alpha_5$ we have 
\[0=\int_{0<c<\frac{B(\underline \alpha_5)+\lambda z\alpha -s}{\lambda}} \lambda z\alpha -s+B(\underline \alpha_5)-\lambda c dH(c)\]
and a quick solution should be letting $B(\underline \alpha_5)+\lambda z\alpha -s=0$ hence we get 
\[\underline \alpha_5=\frac{rs}{rg+\lambda (g-s)} (=\underline \alpha_2)\]
which also fits the smooth pasting conditions.
This result means the social planner will stop at the exact same belief as the partial coverage monopolist.

When $\alpha>\underline \alpha_5$ the social planner's behavior is rather simple. According to definition before, if $\frac{s}{\lambda z}>\alpha>\underline \alpha_5$, we have $p(\alpha,c)=1$ if and only if $c<C_1(\alpha)\equiv \frac{B(\alpha)+\lambda z\alpha -s}{\lambda}$. Since at this time $\alpha<\frac{s}{\lambda z}$, we will always have $B(\alpha)-\lambda c>0$ (i.e. $c<C_2(\alpha)$) if $B(\alpha)+\lambda z\alpha -s>\lambda c$. Hence at any given $(\alpha,c)$, if $p(\alpha,c)=1$ we will always have $q(\alpha,c)=1$. 
This strategy is equivalent to monopolist's Partial Coverage strategy. And for $1>\alpha>\frac{s}{\lambda s}$, we have  $p(\alpha,c)=1$ as discussed before, and $q(\alpha,c)=1, \forall c<C_2(\alpha)=\frac{B(\alpha)}{r}$. In this case it's equivalent to monopolist's Full Coverage strategy. 

Of course there is also the contingency that Full Coverage will degenerate into Non Bonus case. In this language using Non Bonus is implemented by setting $p(\alpha,c)\equiv 1$ and $q(\alpha,c)\equiv 0$. Hence the FC wil again naturally degenerates into NB if $B(\alpha)\le 0$. At this switching point $W(\alpha)$ again is tangent to line $\alpha \lambda z$. It's straightforward that condition $ W(\bar \alpha_{FC}^{NB})=\frac{\bar \alpha_{FC}^{NB}\lambda z}{r}$ and $W'(\bar \alpha_{FC}^{NB})=\lambda z$ guarantees that 
$B(\bar \alpha_{FC}^{NB})=0$. 
In the formal context we use $\bar \alpha_{PC}^{FC}$, which is $\frac{s}{\lambda z}$ to stands for this switching cutoff between PC and FC for social planner . And we use $\bar \alpha_{SA}^{PC}=\underline \alpha_5$ to stands for this switching cutoff between SA and PC for social planner. 

An important part is to consider the monotonicity of $C_1,C_2$ with respect for $\alpha$.  First we should know that according to the definition, we have
\[\lambda C_1=\lambda z \alpha -s+B(\alpha)\] for all $\alpha\in (\alpha_{SA}^{PC},\alpha_{PC}^{FC})$.
And the Bellman Equation for $W(\alpha)$ in this interval should be like 

\[rW(\alpha)-s=\int_{0<c<C_1} \lambda C_1-\lambda c dH(c)  \] 

Take derivation with respect to $\alpha$ in both sides, in left hand side it's $\frac{d W(\alpha)}{d \alpha}>0$ and for the right hand side we have
\[\frac{\partial RHS}{\partial \alpha}=\frac{\partial RHS}{\partial C_1} \frac{\partial C_1}{\partial \alpha}\]
and 
\[\frac{\partial RHS}{\partial C_1}=\lambda H(C_1)>0\]
hence we have \[\frac{\partial C_1}{\partial \alpha}>0\]
As for $C_2$, we have $\lambda C_2=B(\alpha)$ and for $\alpha \in (\alpha_{PC}^{FC},\alpha_{FC}^{NB})$ we have 
\[rW(\alpha)-\lambda z\alpha =\int_{0<c<C_2} \lambda C_2-\lambda c dH(c)\]
we can prove $\frac{d rW(\alpha)-\lambda z\alpha}{d \alpha}<0$ as in section \ref{proof_fconly}, and prove $\frac{\partial RHS}{\partial C_2}>0$ as before. Hence we have $\frac{\partial C_2}{\partial \alpha}<0$. 
\subsubsection{Dynamic Mechanism Design}
The left part of the mechanism design problem is to design $t(\alpha,c)$ carefully to guarantee that IC holds.

\[U(c,\tilde c,\alpha)=-t(\alpha,\tilde c)+p(\alpha,\tilde c)[\alpha z \lambda -\lambda q(\alpha,\tilde c) c]+(1-p(\alpha,\tilde c))s\]

The following lemma shows that above social mechanism design is feasible using transfers:
\begin{lemma}
	The following table \ref{social mechanism} holds.
\end{lemma}
\begin{table}[H]
\centering
	\begin{tabular}{|c|c|c|c|c|}
		\hline
		condition1	& condition2 &  $p(\alpha,c)$ & $q(\alpha,c)$ & $t(\alpha,c)$ \\
		\hline
		$\alpha>\frac{s}{\lambda z}$	&  $0<c<C_2(\alpha)$ & 1 & 1 & $\alpha z \lambda -\lambda C_2(\alpha)$ \\
		\hline
		$\alpha>\frac{s}{\lambda z}$	&  $C_2(\alpha)<c<C_1(\alpha)$& 1 & 0 & $\alpha z\lambda $ \\
		\hline
		$\alpha>\frac{s}{\lambda z}$	&  $C_1(\alpha)<c<\bar c$& 0 & - & $s$ \\
		\hline
		$\alpha<\frac{s}{\lambda z}$	&  $0<c<C_1(\alpha)$& 1 & 1 & $\alpha z \lambda -\lambda C_1(\alpha)$ \\
		\hline
		$\alpha<\frac{s}{\lambda z}$	&  $C_1(\alpha)<c<\bar c$& 0 & - & $s$ \\
		\hline
	\end{tabular}
	\caption{Optimal mechanism of social planner}
	\label{social mechanism}
\end{table}

\begin{proof}

Since $\alpha$ is common knowledge, we only need to guarantee that the agent will report $c$ truthfully. In this case, $C_1(\alpha),C_2(\alpha)$ and $B(\alpha)$ is a known constant.

In our ideal setting of $p(\alpha,c)$ and $q(\alpha,c)$, there are three possible combinations:
\[\left \{
\begin{aligned}
    &p=0 & \Rightarrow &u_1=-t(\alpha,c)+s\\
    &p=1,q=0  &\Rightarrow &u_1=-t(\alpha,c)+\alpha \lambda z\\
    &p=1,q=1 &\Rightarrow &u_3=-t(\alpha,c)+\lambda z \alpha-c \\
\end{aligned} \right.\]
We now prove this can be supported by constant transfer. If $\alpha<\frac{s}{\lambda z}$, we let $t(\alpha,c)=s,\forall c>C_1$. So by assuming $\forall c>C_1$, $ t(\alpha,c)$ is some constant $M$, we require
\[0\ge -M+\alpha\lambda z-c,\forall c >C_1\]
and 
\[-M+\alpha\lambda z-c \ge 0,\forall c \le C_1\]
    Hence $M=\alpha \lambda z-C_1$ satisfies our requirements.  

\[t(\alpha,c)= \left \{
\begin{aligned}
    & s & \text{ if } \alpha<\frac{s}{\lambda z}, c>C_1(\alpha)\\
    & \alpha \lambda z-\lambda C_1(\alpha) & \text{ if } \alpha<\frac{s}{\lambda z}, c<C_1(\alpha)\\
\end{aligned} \right.\]

Similarly if $\alpha >\frac{s}{\lambda z}$ we can always let 
\[t(\alpha,c)= \left \{
\begin{aligned}
    & s & \text{ if } \alpha>\frac{s}{\lambda z}, c>C_1(\alpha)\\
    & \alpha \lambda z & \text{ if } \alpha>\frac{s}{\lambda z}, C_2(\alpha)<c<C_1(\alpha)\\
    & \alpha \lambda z-\lambda C_2(\alpha) & \text{ if } \alpha>\frac{s}{\lambda z}, c<C_2(\alpha)\\
\end{aligned} \right.\]
Since for $ c>C_1(\alpha)$, truthful reporting generates utility 0. But deviating to any $\tilde c<C_1(\alpha)$ also generates utility 0, while deviating to $\tilde c<C_2(\alpha)$ generates utility $\lambda (C_2-c)<0$. $ C_1(\alpha)>c>C_2(\alpha)$ works with the same logic. And for $c<C_2(\alpha)$, truthful reporting (and deviating to any $\tilde c<C_2(\alpha)$) generates utility $\lambda (C_2-c)>0$. Otherwise the utility is 0.
\end{proof}
\subsubsection{Social Surplus under Monopolist}
\label{Social_Surplus_Monopolist}
We calculate the value function of social surplus under monopolist setting. The social surplus has the following stage payoff:
\[
    \lim_{\Delta\to0}\frac{\pi_{L}(\Delta,c)}{\Delta}=\left\{
\begin{aligned}
	&\alpha z \lambda - P_L(b)c& \text{ if } L=\text{FC}\\
	&P_L(b) \lambda z \alpha-P_L(b)c +[1-P_L(b)]s& \text{ if } L=\text{PC}\\
	& s&\text{ if } L=\text{SA}\\
    & \alpha z  \lambda &\text{ if } L=\text{NB}
\end{aligned} \right.\]

The law of motion like \eqref{ODE_vf} still holds, except for the fact that the social surplus is not a result of maximization. Accordingly, the bonus is not optimally determined, instead it should be determined as in Lemma \ref{fconly_lemma} and \ref{pconly_lemma}. To be more specific, denote the value function when the belief is $\alpha$ and the current agent's reporting cost is $c$ as $\Gamma(\alpha,c)$. Note that $\mathbb E_c[P_L(b)]=H(b)$ and $\mathbb E_c[P_L(b)c]=H(b) \mathbb E(c|c<b)$. Let $\Lambda(\alpha)=\mathbb E_{c}[\Gamma(\alpha,c)]$ to stand for the value function of social surplus under the monopolist setting; then we have
\[
	[r+P_L(b)\alpha \lambda ] \Gamma(\alpha,c)=\lim_{\Delta\to0}\frac{\pi_{L}(\Delta,c)}{\Delta}+P_L(b)(\alpha^2-\alpha)\lambda \Gamma_{\alpha}(\alpha,c)+P_L(b)\frac{\alpha \lambda^2 z }{r}
\]
hence 
\begin{equation}
    [r+H(b)\alpha \lambda ] \Lambda(\alpha)=\mathbb E_c\lim_{\Delta\to0}\frac{\pi_{L}(\Delta,c)}{\Delta}+H(b)(\alpha^2-\alpha)\lambda \Lambda'(\alpha)+H(b)\frac{\alpha \lambda^2 z }{r}
    \label{value_function_social_surplus}
\end{equation}
Notice in \eqref{value_function_social_surplus}, the bonus $b$ is not results of optimization. Instead, $b(\alpha)$ is determined by FOC of optimal switching monopolist. If $\alpha\in [\alpha_{PC}^{FC},\alpha_{FC}^{NB}]$, it fits
\[\beta(b)=\frac{\alpha z \lambda^2 }{r}+(\alpha^2-\alpha)\lambda V'(\alpha)-\alpha\lambda V(\alpha)\]
and If $\alpha\in [\alpha_{SA}^{PC},\alpha_{PC}^{FC}]$, it fits
\[ \beta(b)=\lambda \alpha z -  s+\frac{\alpha z \lambda^2}{r}+(\alpha^2-\alpha)\lambda U'(\alpha)-\lambda \alpha U(\alpha)\]
where $V(\cdot)$ and $U(\cdot)$ are value functions of optimally switching monopolist, and 
\[\mathbb E_c\lim_{\Delta\to0}\frac{\pi_{1_,L}(\Delta,c)}{\Delta}=\left\{
\begin{aligned}
	&\alpha z \lambda - H(b)\mathbb E(c|c<b)& \text{ if } L=\text{FC}\\
	&H(b) \lambda z \alpha-H(b)\mathbb E(c|c<b) +[1-H(b)]s& \text{ if } L=\text{PC}\\
	& s&\text{ if } L=\text{SA}\\
    & \alpha z  \lambda &\text{ if } L=\text{NB}
\end{aligned} \right.\]

\subsection{Proof of Proposition \ref{proposition_speed}}
\begin{proof}
The second half is actually proven in Section \ref{proof_social_planner_optimization}. 
(Not updated, need to add)
In order to show our results more intuitively, we characterize the equivalent mechanism $(p,q)$ of the monopolist.
Take the partial coverage case as an example, the following setting can lead the (ex ante) equivalent learning process as monopolist's partial coverage 
\[p(\alpha,c)=\left \{
\begin{aligned}
	&1 &\text{ where } R(\alpha)-c>0\\
	&0 &\text{ otherwise } 
\end{aligned} \right.\]
and 
\[q(\alpha,c)\equiv 1\]
where as declared in Lemma \ref{pconly_lemma}, $R(\alpha)$ is the solution of equation 
\[
r\beta(R)=\lambda \alpha z r- r s+\alpha z \lambda^2+(\alpha^2-\alpha)\lambda U'(\alpha)-\lambda \alpha U(\alpha)
\]
Hence at the stopping belief $\underline \alpha_2$, we should have $p(\underline \alpha_2,c)=0, \forall c$, hence $R(\alpha)=0$ which leads to the restriction of value function:
\[
0=\lambda \underline \alpha_2 z r- r s+\underline \alpha_2 z \lambda^2+(\underline \alpha_2^2-\underline \alpha_2)\lambda U'(\underline \alpha_2)-\lambda \alpha U(\underline \alpha_2)
\]
which is just the same as the value function of social planner when $\alpha<\frac{s}{z\lambda }$. Hence in the monopolist's model, in the interval of using Partial Coverage, he sets $p=1$ for all $c$ such that 
\[
rc<\beta^{-1}[\lambda \alpha z r- r s+\alpha z \lambda^2+(\alpha^2-\alpha)\lambda U'(\alpha)-\lambda \alpha U(\alpha)]
\]
But in social planner's model, he sets $p=1$ for all $c$ such that
\[r c<\lambda \alpha z r- r s+\alpha z \lambda^2+(\alpha^2-\alpha)\lambda W'(\alpha)-\lambda \alpha W(\alpha)\]
Since $\beta(x)=\frac{H(x)}{h(x)}+x>x$, we always have the reporting cost cutoff $\bar C_{\text{mo}}< \bar C_{\text{sp}}$. Hence we say the monopolist will always set higher probability to do experiments.

Similarly, using Full Coverage is equivalent to set
\[p(\alpha,c)\equiv 1\] and 
\[q(\alpha,c)=\left \{
\begin{aligned}
	&1 &\text{ where } b(\alpha)-c>0\\
	&0 &\text{ otherwise } 
\end{aligned} \right.\]
According to our definition, $b(\alpha )$ is the solution of equation
\[
r\beta(b)=\alpha z \lambda^2 +(\alpha^2-\alpha)\lambda V'(\alpha)-\alpha\lambda V(\alpha)
\]

Simply comparing $(p,q)$ combination between the mechanism under the monopolist and the social planner we can find the social planner will do experiments in a higher probability hence attaches the termination in a higher speed. 
\end{proof}

\subsection{Numerical Example in Section \ref{section_Imperfect_Learning}}
\label{appendix_numeric}
 In this section we provide a numerical example by assuming $c_i\sim U[0,1]$. 
We set $r=0.5,z=7,\lambda=0.8, s=0.5\lambda z$. Hence according to Lemma  \ref{fconly_lemma}, the ODE of FC only value function should be
 \[rV(\alpha)=\lambda z \alpha+ b^2\]
 and
 \[2 b=(\alpha^2-\alpha)\lambda V'(\alpha)-\alpha\lambda V(\alpha)+\frac{\alpha \lambda^2 z}{r}\]

 Hence we have 
 \[4b^2=4[rV(\alpha)-\lambda z \alpha]\]
 So 
 \[2b=\sqrt{4 [rV(\alpha)-\lambda z \alpha]}\]
 So we have the law of motion 
 \[V'(\alpha) = \frac{\sqrt{4 [rV(\alpha)-\lambda z \alpha]}+\alpha \lambda V(\alpha)-\frac{\alpha \lambda^2 z}{r}}{(\alpha^2-\alpha)\lambda}\]

For boundary condition we need 
$\sqrt{4[s-\lambda z \alpha]}+\alpha \lambda \frac{s}{r}-\frac{\alpha \lambda^2 z}{r} = 0$. 
Hence at $\underline \alpha_1= \frac{-2\lambda z+2\sqrt{\lambda^2 z^2+A^2s}}{A^2}$ where $A=\frac{\lambda(g-s)}{r}$ we have $V(\alpha)=s/r$. 

Similarly, with Lemma \ref{pconly_lemma} we know $U(\alpha)$ fits the following ODE
 \[U'(\alpha) = \frac{\sqrt{4 [rU(\alpha)-s]}+\alpha \lambda U(\alpha)-\frac{\alpha \lambda^2 z}{r}+s-\lambda \alpha z}{(\alpha^2-\alpha)\lambda}\]
 with boundary condition
 \[U(\frac{rs}{\lambda(g-s)+rg})=\frac{s}{r}\]

 In optimal switching case, we first pin down $U(\alpha)$ and hence get $U(0.5)$. Then for $V(\alpha)$ still fits the original law of motion
\[V'(\alpha) = \frac{\sqrt{4 [rV(\alpha)-\lambda z \alpha]}+\alpha \lambda V(\alpha)-\frac{\alpha \lambda^2 z}{r}}{(\alpha^2-\alpha)\lambda}\]
  with boundary condition 
  \[V(0.5)=U(0.5)\] 
  Hence we can pin down $V(\alpha)$. Denote the intersection point of $V(\alpha)$ and $\frac{\alpha z \lambda}{r}$ we get  $\alpha_{FC}^{NB}$.

  In social planner case, if $\alpha<\alpha_{PC}^{FC}$, we have
  \[rW(\alpha)-s=\int_{0<c<C_1} \lambda C_1-\lambda cd H(c)=\frac{\lambda }{2}C_1^2\]
  hence 
  \[\lambda C_1=\sqrt{2\lambda(rW(\alpha)-s)} \]
  plug the definition of $C_1$, we can get
  the motion law is 
  \[W'(\alpha)=\frac{s-\alpha z \lambda+\lambda \alpha W(\alpha)-\frac{\alpha \lambda^2 z}{r}+\sqrt{2\lambda(rW(\alpha)-s)} }{(\alpha^2-\alpha)\lambda }\]
  with boundary condition 
  \[W(\alpha_{SA}^{PC})=\frac{s}{r}\]
  Similarly when $\alpha>\alpha_{PC}^{FC}$, we have
  \[W'(\alpha)=\frac{\lambda \alpha W(\alpha)-\frac{\alpha \lambda^2 z}{r}+\sqrt{2\lambda(rW(\alpha)-\lambda \alpha z)} }{(\alpha^2-\alpha)\lambda }\]
  with boundary condition requiring $W(\alpha)$ is smooth at $\alpha_{PC}^{FC}$. 

  Then we calculate the social surplus under monopolist's setting. If $c\sim U[0,1]$ then $E(c|c<x)H(x)=\frac{x^2}{2}$. Hence in PC interval, we have 
  \[r\Lambda(\alpha)-s=\frac{3}{2}\lambda R^2\]
  hence 
  \[2\lambda R=2\sqrt{\frac{2}{3}(r\Lambda(\alpha)-s)}\]
  hence 
  \[\Lambda'(\alpha)=\frac{2\sqrt{\frac{2}{3}(r\Lambda(\alpha)-s)}+\lambda \alpha \Lambda(\alpha)+(s-\lambda z\alpha)-\frac{\alpha z \lambda^2}{r}}{(\alpha^2-\alpha)\lambda}\]
  with the same boundary condition of $U(\alpha)$. Similarly in Full Coverage Interval we have
  \[\Lambda'(\alpha)=\frac{2\sqrt{\frac{2}{3}(r\Lambda(\alpha)-\alpha\lambda z)}+\lambda \alpha \Lambda(\alpha)-\frac{\alpha z \lambda^2}{r}}{(\alpha^2-\alpha)\lambda}\]
 with the boundary condition $\Lambda(\alpha)$ is smooth at $\alpha_{FC}^{PC}$. \

 \subsection{Numerical Example in Section \ref{section_IR}}
 \label{section_numerical_IR}
 For Figure\ref{fig:IRpqplot}, the parameter setting is $r=0.5, z=20, \lambda=0.8, s=0.5g,
\bar c=1$. For Figure \ref{fig:IRvalue_function}, the parameter setting is $r=0.5, z=20, \lambda=0.8, s=0.5g,
\bar c=0.8$.

The algorithm is, first get $U(\alpha)$ and $V(\alpha)$ as if $\bar c$ is large enough (i.e., use $H(b)=\frac{b}{\bar c}$ instead of  $H(b)=\min(\frac{b}{\bar c},1)$. The basic algorithm are the same as in Section \ref{appendix_numeric}, while the only difference is to modify the law of motion allowing arbitrary $\bar c$ as:
 \[V'(\alpha) = \frac{\sqrt{4 \bar c[rV(\alpha)-\lambda z \alpha]}+\alpha \lambda V(\alpha)-\frac{\alpha \lambda^2 z}{r}}{(\alpha^2-\alpha)\lambda}\]
and 
  \[U'(\alpha) = \frac{\sqrt{4\bar c [rU(\alpha)-s]}+\alpha \lambda U(\alpha)-\frac{\alpha \lambda^2 z}{r}+s-\lambda \alpha z}{(\alpha^2-\alpha)\lambda}\]

Then based on Lemma \ref{fconly_lemma} and \ref{pconly_lemma}, we calculate $b_V(\alpha)$ and $b_U(\alpha)$ separately. Since there is no explicit expression of solution value function, we use polynomial approximation. Then define the real bonus like $\tilde b_K(\alpha)=\min(b_K(\alpha),\bar c)$ for $K\in \{U,V\}$, and get $\alpha_{PC}^{IR}$, $\alpha_{IR}^{FC}$ simultaneously. Then given $\tilde b_U(\alpha)$, we can get real $U(\alpha)$ by ODE:
\begin{equation*}
    \begin{aligned}
    r U(\alpha)=&H(\tilde b_U(\alpha)) (\lambda \alpha z-\tilde b_U(\alpha)) +[1-H(\tilde b_U(\alpha))]s\\&+H(\tilde b_U(\alpha))\frac{\alpha \lambda^2 z}{r}+H(\tilde b_U(\alpha))(\alpha^2-\alpha)\lambda U'(\alpha)-H(\tilde b_U(\alpha))\alpha \lambda U(\alpha) 
\end{aligned}
\end{equation*}
with boundary condition
\[U(\alpha_{SA}^{PC})=\frac{s}{r}, \alpha_{SA}^{PC}=\frac{sr}{\lambda(g-s-zr)}\]
Then for $Y(\cdot)$ we have
\[    r Y(\alpha)= \lambda \alpha z(1+ \frac{\lambda}{r})-\bar c-\alpha \lambda Y(\alpha)+(\alpha^2-\alpha)\lambda Y'(\alpha)\]
with 
\[Y(\alpha_{PC}^{IR})=U(\alpha_{PC}^{IR})\]
Then for $V(\cdot)$ we have
 \begin{equation*}
 \begin{aligned}
     r V(\alpha)=&\lambda \alpha z[1+H(\tilde b_V(\alpha) )\frac{\lambda}{r}]-H(\tilde b_V(\alpha) )\tilde b_V(\alpha) \\&-H(\tilde b_V(\alpha) )\alpha \lambda V(\alpha)+H(\tilde b_V(\alpha) )(\alpha^2-\alpha)\lambda V'(\alpha)
 \end{aligned}
\end{equation*}
with 
\[V(\alpha_{IR}^{FC})=Y(\alpha_{IR}^{FC})\]
Finally $V(\alpha)$ smoothly switch to Non Bonus when $V'(\alpha_{FC}^{NB})=\frac{\lambda z}{r}$. 
 \newpage

\end{document}